\documentclass[preprint,12pt]{elsarticle}
\usepackage{graphicx,amssymb,amsmath,subfig,numprint}
\journal{Theoretical Computer Science}
\pdfoutput=1

\newtheorem{theorem}{Theorem}
\newtheorem{lemma}{Lemma}
\newtheorem{definition}{Definition}

\newenvironment{proof}{\noindent \emph{Proof. }}{\hfill \hbox{\rlap{$\sqcap$}$\sqcup$}\\}

\begin{document}

\begin{frontmatter}

\title{Distances on Rhombus Tilings}

\author{Olivier Bodini}
\address{LIP6, CNRS \& Univ. Paris 6,\\ 4 place Jussieu 75005 Paris - France}
\author{Thomas Fernique}
\author{Michael Rao}
\address{Poncelet Lab., CNRS \& Independent Univ. Moscow,\\ 119002, Bolshoy Vlasyevskiy Per. 11, Moscow - Russia}
\author{\'Eric R\'emila}
\address{LIP, CNRS \& ENS de Lyon \& Univ. Lyon,\\ 46 all\'ee d'Italie 69007 Lyon - France}

\begin{abstract}
The rhombus tilings of a simply connected domain of the Euclidean plane are known to form a flip-connected space (a flip is the elementary operation on rhombus tilings which rotates $180^{\circ}$ a hexagon made of three rhombi).
Motivated by the study of a quasicrystal growth model, we are here interested in better understanding how ``tight'' rhombus tiling spaces are flip-connected.
We introduce a lower bound (Hamming-distance) on the minimal number of flips to link two tilings (flip-distance), and we investigate whether it is sharp.
The answer depends on the number $n$ of different edge directions in the tiling: positive for $n=3$ (dimer tilings) or $n=4$ (octogonal tilings), but possibly negative for $n=5$ (decagonal tilings) or greater values of $n$.
A standard proof is provided for the $n=3$ and $n=4$ cases, while the complexity of the $n=5$ case led to a computer-assisted proof (whose main result can however be easily checked by hand).
\end{abstract}

\begin{keyword}
computer-assisted proof, flip, phason, pseudoline arrangement, quasicrystal, rhombus tiling, tiling space
\end{keyword}
\end{frontmatter}

\section{Introduction}

The discovering in the early 1980's of non-periodic crystals, soon called {\em quasicrystals}, renewed the interest in {\em tilings}.
Indeed, tilings show that a {\em global} property such as non-periodicity can sometimes be enforced by purely {\em local} (hence physically realistic) constraints, namely the way neighbor tiles can match.
The most celebrated example is probably the {\em Penrose tilings}, which are decagonal rhombus tilings of the plane whose non-periodicity can be enforced just by specifying the way rhombi are allowed to fit around a vertex.
Several books have since been written on quasicrystals, and even those focusing on physical properties have a chapter on tilings, with Penrose tilings in the first place (the interested reader can easily found numerous references).\\

Tilings are also widely studied in statistical mechanics, in particular {\em dimer tilings}.
These are tilings of a bounded subset of a planar grid either by dominoes (square grid) or rhombi (triangular grid).
The aim is to understand statistical properties of the set of tilings of a given domain, called a {\em tiling space}.
The interested reader can refer, {\em e.g.}, to the recent survey \cite{kenyon2}.\\

Strongly related to tilings is the notion of {\em flip} (also called {\em phason-flip} or simply {\em phason}).
A flip is the local operation which acts on rhombus tiling by rotating $180^\circ$ a hexagon made of three rhombi\footnote{For dimer tilings on the square grid, a flip rotates $90^\circ$ a square made of two dominoes.}.
Flips may model transformations which occur in real (quasi)crystals \cite{flip}.
The general aim, which inspires this paper, is to understand the flip dynamics on tiling spaces.\\

In particular, the flip dynamics is said to be {\em ergodic} if the more flips are performed on a tiling, the closer it is from a typical tiling (that is, a tiling chosen uniformly at random in the tiling space).
Ergodicity thus provides, when it holds, an algorithmic method to numerically investigate statistical properties of tiling spaces.
Ergodicity have been proven first for dimer tilings \cite{propp} and then for any rhombus tilings \cite{kenyon} (for simply connected domains).\\

However, ergodicity does not tell anything about the number of flips which shall be performed before being relatively close to a typical tiling.
This is not only a drawback from a computational viewpoint (how far a simulation shall be run?), but can also lead to physical mispredictions: ergodicity can yield properties that have no chance to be experimentally observed because the flip dynamics is much too slow.
Ergodicity results can be refined by studying the {\em mixing time} of the Markov process whose steps consist in performing a flip on a vertex chosen uniformly at random (when possible).
A polynomial bound on the mixing time of dimer tilings have been obtained in \cite{mixing2}, relying on the analysis of a related Markov chain \cite{mixing1,mixing3}.
Numerical experiments suggest that a similar bound holds for rhombus tilings whose rhombi are defined over four possible edge directions (octogonal tilings) \cite{destainville2}.\\

In this context, the motivation of this paper comes from the model of quasicrystal growth introduced in \cite{aperiodic}.
Starting from a typical tiling (assumed to be an entropically-stabilized quasicrystal at high temperature), flips are randomly performed, with the probability of each flip depending on its neighborhood (assumed to model the energy of the quasicrystal).
This process can be seen as the minimization of the energy of a tiling by a stochastic gradient descent in the tiling space.
The question is whether a minimal energy tiling is reached (a sort of partial ergodicity), and at which rate (mixing time).
In other word, the question is whether the shortest paths (in terms of flips) of tiling spaces are enough ``regular'' to be easily found by the gradient descent.
As an attempt to measure this regularity, this paper introduces a distance, called Hamming-distance, and compares it to the above flip-distance.\\

The main result obtained in this paper is that rhombus tiling spaces are regular when rhombi are defined over at most four edge directions (dimer or octogonal tilings), but loose this regularity for five directions (decagonal tilings) or more.
Theorem~\ref{th:main} formally states this in Section~\ref{sec:comparison}.
Before, Section~\ref{sec:tilings} and \ref{sec:distances} first introduce main definitions, especially flip- and Hamming-distances.\\

Let us also mention that flips can be defined on tilings of the space by rhomboedra\footnote{In this case, flips act on dodecahedra made of four rhomboedral tiles.}.
Ergodicity does not always hold in such cases \cite{destainville}, and our results can be extended to show that rhomboedra tiling spaces are not regular, except when there is only one rhombohedron up to isometry (what is already known), and maybe for a finite number of small domains.

\section{Rhombus tilings and pseudoline bundles}
\label{sec:tilings}

Let $\vec{v}_1,\ldots,\vec{v}_n$ be pairwise non-collinear unit vectors of the Euclidean plane.
These $n$ vectors define $\binom{n}{2}$ {\em rhombus tiles} $\{\lambda\vec{v}_i+\mu\vec{v}_j~|~0\leq \lambda,\mu\leq 1\}$.
A {\em domain} $D$ is a connected subset of the Euclidean plane.
A {\em rhombus tiling} of $D$ is a set of translated rhombus tiles which cover $D$, with the intersection of two tiles being either empty, or a vertex, or an entire edge.
We here focus on {\em zonotopal} domains of the form
$$
D=\{\lambda_1\vec{v}_1+\ldots+\lambda_n\vec{v}_n~|~\forall k,~0\leq \lambda_k\leq a_k\},
$$
where $a_1,\ldots, a_n$ are positive integers.
One simply writes $D=(a_1,\ldots,a_n)$.\\

\begin{figure}[hbtp]
\centering
\includegraphics[width=0.9\textwidth]{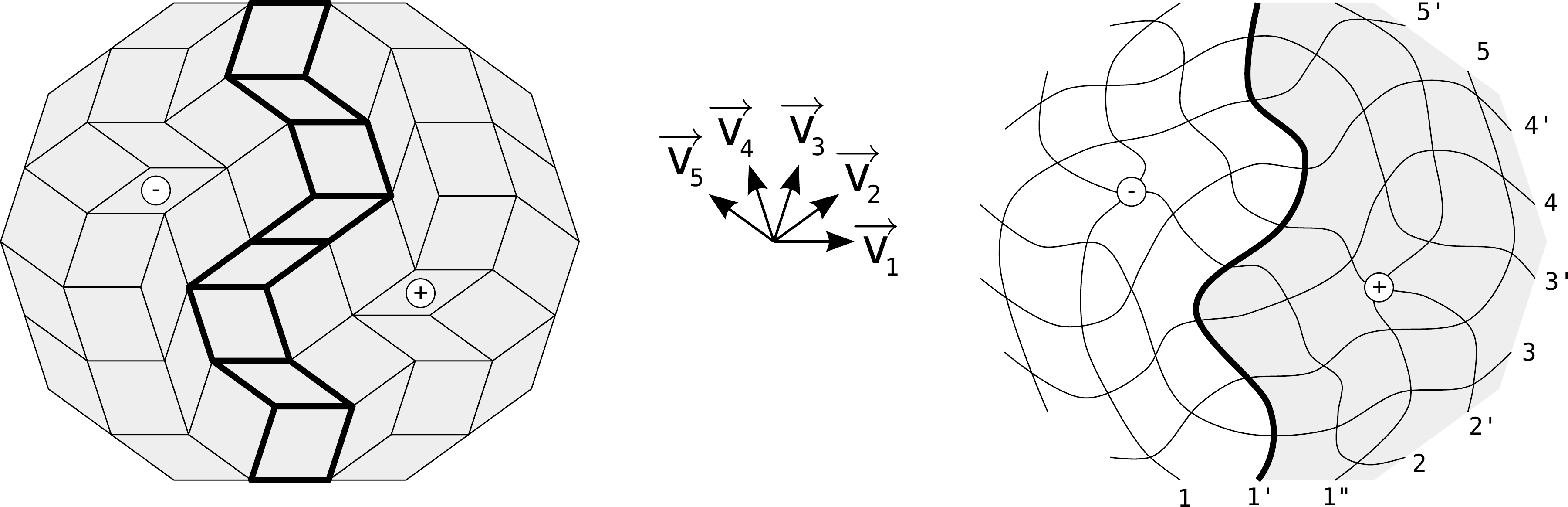}
\caption{A rhombus tiling of the zonotope $(3,2,2,2,2)$ with an emphasized $\vec{v}_1$-directed ribbon (left).
The corresponding numbered pseudolines arrangement (right, with $1'$ corresponding to the emphasized ribbon on the left).
We marked two vertices (right, and the corresponding tiles on the left) by a sign indicating whether they are in $i'_+$ or $i'_-$ (see below).}
\label{fig:rhombus_tiling_pseudolines}
\end{figure}

While studying Penrose tilings, de Bruijn introduced ``ribbons'' of tiles, now generally called {\em de Bruijn lines} \cite{debruijn}.
Formally, a {\em $\vec{v}_i$-directed ribbon} is an inclusion-maximal sequence of tiles, with each one being adjacent to the following one by an edge parallel to $\vec{v}_i$ (Fig.~\ref{fig:rhombus_tiling_pseudolines}, left).\\

Drawing through a $\vec{v}_i$-directed ribbon a continuously differentiable curve $i$ with no tangent parallel to $\vec{v}_i$ splits the domain in two connected components $Si_+:=\{i+\lambda\vec{v}_i~|~\lambda>0\}$ and $i_-:=\{i-\lambda\vec{v}_i~|~\lambda>0\}$ (Fig.~\ref{fig:rhombus_tiling_pseudolines}, right).
Such a curve is called a pseudoline.
The union of pseudolines drawn on the ribbons of a tiling yields a kind of {\em pseudoline arrangement}, where the properties of classical pseudoline arrangements (see, {\em e.g.}, \cite{matroids})
\begin{enumerate}
\item two pseudolines cross in {\em exactly} one point;
\item the intersection of {\em all} pseudolines is empty,
\end{enumerate}
shall here be modified into
\begin{enumerate}
\item two pseudolines cross in {\em at most} one point;
\item the intersection of {\em any three} pseudolines is empty.
\end{enumerate}
We call {\em $i$-th bundle} the set of pseudolines drawn in $\vec{v}_i$-directed ribbons and numbered according to their crossings with the line directed by $\vec{v}_i$.
We denote by $i,i',\ldots,i^{(a_i)}$ the $a_i$ pseudolines of the $i$-th bundle (Fig.~\ref{fig:rhombus_tiling_pseudolines}, right).

\section{Flip- and Hamming-distances}
\label{sec:distances}

Whenever three tiles of a rhombus tilings form a hexagon, a $180^{\circ}$ rotation around their common vertex yields a new rhombus tiling of the same domain; such a local operation is called {\em flip} (Fig.~\ref{fig:flip}, left).
In terms of pseudolines, a flip continuously deforms pseudolines so that exactly one crossing of two pseudolines is moved across exactly one pseudoline (Fig.~\ref{fig:flip}, right).
In other words, a flip {\em inverts} an inclusion-minimal {\em triangle} (formal definition below).\\

\begin{figure}[hbtp]
\centering
\includegraphics[width=0.9\textwidth]{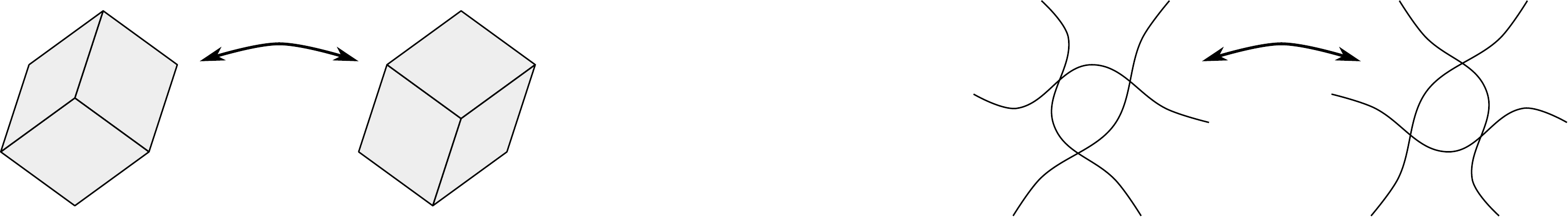}
\caption{A flip on rhombus tiles (left) and on pseudolines (right).}
\label{fig:flip}
\end{figure}

The flip-invariance of a domain leads to define its {\em rhombus tiling space}: this is the undirected graph whose vertices are the rhombus tilings of the domain, with two tilings being connected by an edge if they differ by a flip.
This space is naturally endowed with the following metric.

\begin{definition}
The {\em flip-distance} between two tilings of a rhombus tiling space is the minimal number of flips to perform for transforming one into the other.
\end{definition}

Kenyon \cite{kenyon} proved that the rhombus tiling space of a finite simply connected domain ({\em e.g.}, a zonotope) is connected, that is, the flip-distance is bounded on such a space.
This yields a first insight on the structure of rhombus tiling spaces.
In order to further understand this structure, we introduce a second metric, which relies on the notion of {\em signed triangle}.\\

\begin{figure}[hbtp]
\centering
\includegraphics[width=0.9\textwidth]{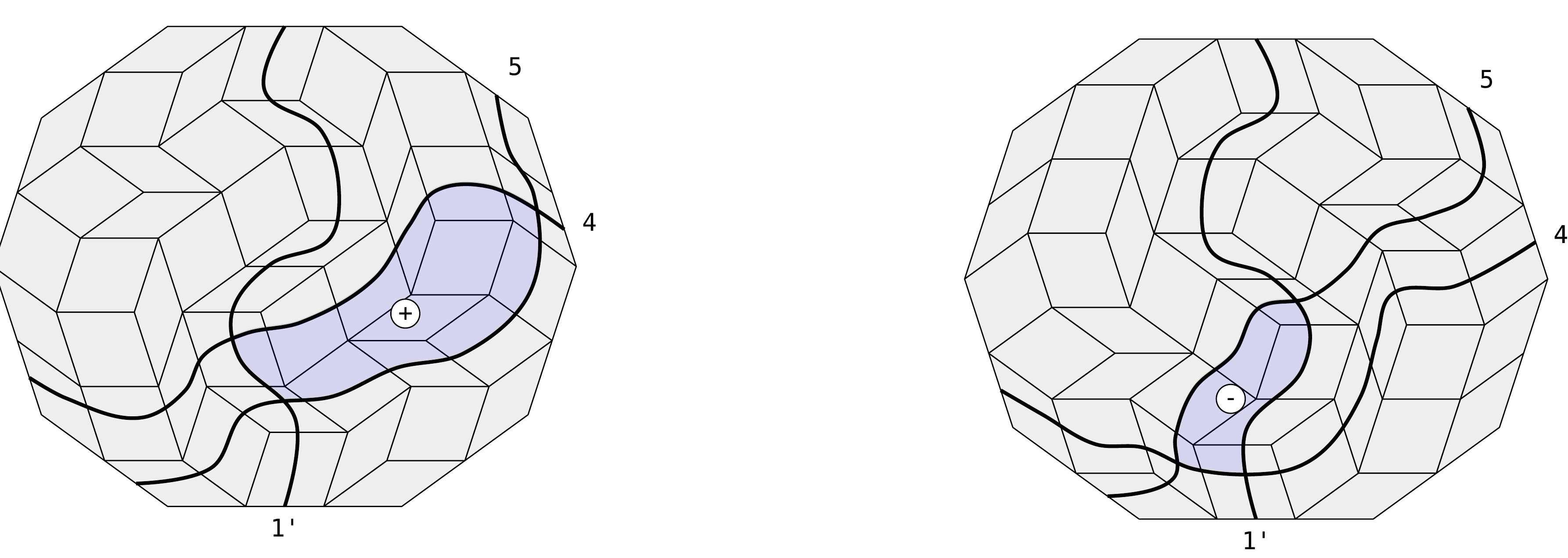}
\caption{Two tilings in which the triangle $(1',4,5)$ has a different sign.}
\label{fig:triangles}
\end{figure}

\begin{definition}
A {\em triangle} is a triple $(i^{(a)},j^{(b)},k^{(c)})_{i<j<k}$ of pairwise crossing pseudolines.
It is said to be {\em positive} if $i^{(a)}\cap j^{(b)}\in k^{(c)}_+$, {\em negative} otherwise.
\end{definition}

Since a pseudoline drawn in a ribbon of a tiling also appears in any other tiling of the same domain, the notion of triangle depends not on a particular tiling but on a rhombus tiling space.
However, since the representation of a pseudoline depends on the particular tiling it is drawn on, so does also the representation of a triangle.
In particular, a triangle can have a different sign in two tilings of the same space: it is said to be {\em inverted} with respect to these two tilings (Fig.~\ref{fig:triangles}).

\begin{definition}
The {\em Hamming-distance} between two tilings of a rhombus tiling space is the number of triangles which have different sign in these two tilings.
\end{definition}

If we associate with each tiling of a rhombus tiling space a string whose $k$-th symbol is the sign in this tiling of the $k$-th triangle\footnote{We can, for example, order triangles by the lexicographic order induced on triples of pseudolines by the ordering of pseudolines.}, then the Hamming-distance between two tilings is exactly the usual Hamming distance between their associated strings.
One can moreover show that such a string completly characterizes the tiling it is associated with.\\

The Hamming-distance provides a natural and simple lower bound on the flip-distance.
Indeed, since each flip changes the sign of exactly one triangle, transforming a tiling into another one requires at least as many flips as there are triangles with a different sign in the two tilings.
The main question that we examine in the next section is: are these distances actually equal?\\

Intuitively, an equality between flip- and Hamming-distances would mean that the rhombus tiling space is rather friendly: tilings can be efficiently connected by flips.
On the contrary, an inequality would mean that, even if connected, the rhombus tiling space has obstructions: ``auxiliary'' flips must sometimes be back and forth performed in order to free ``efficient'' flips.\\

\begin{figure}[hbtp]
\centering
\includegraphics[width=0.85\textwidth]{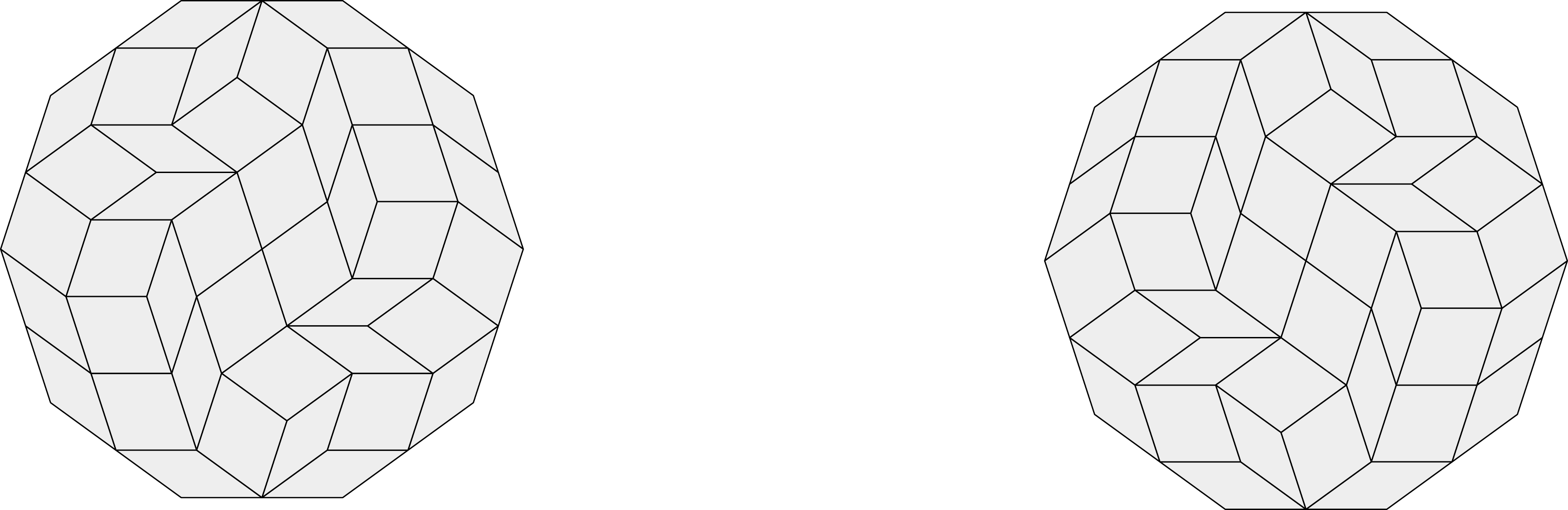}
\caption{
Two of the \numprint{16832230} tilings of the zonotope $(2,2,2,2,2)$.
One checks that $24$ out of $80$ triangles have a different sign in both tilings, whose Hamming-distance is thus $24$.
Theorem~\ref{th:main} (below) ensures that they are also at flip-distance $24$.
}
\label{fig:hamming}
\end{figure}

Note also that computing the flip-distance between two tilings {\em a priori} requires to compute the flip-distance to one of them to all the tilings of the space, up to find the other one.
This is however highly unefficient, because the size of a rhombus tiling space is known to be at least exponential in the size ({\em i.e.}, number of tiles) of its tilings.
On the contrary, computing the Hamming-distance between two tilings can be done very efficiently (in sesquilinear time in the size of the tilings).
The equality of flip- and Hamming distance has thus also important implications from a computational viewpoint (Fig.~\ref{fig:hamming}).

\section{Distance comparison}
\label{sec:comparison}

\subsection{Main result}

\begin{theorem}\label{th:main}
Consider the rhombus tiling space of a zonotopal domain $D$.
The flip- and Hamming-distances are equal on this tiling space if and only if
$$
D\in\{(a,b,c),~(a,b,c,d),~(a,b,c,d,1),~(2,2,2,2,2)~|~a,b,c,d\geq 1\}.
$$
\end{theorem}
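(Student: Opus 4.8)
The plan is to reduce the equality of the two distances to a local criterion, establish the positive cases through the lattice structure of the tiling space, and settle the negative cases by exhibiting two minimal counterexamples and propagating them.

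Since each flip changes the sign of exactly one triangle, one has $d_{\mathrm{flip}}\ge d_{\mathrm{Ham}}$ on every tiling space; I claim that, conversely, $d_{\mathrm{flip}}=d_{\mathrm{Ham}}$ on the tiling space of $D$ \emph{if and only if} for every two distinct tilings $T,T'$ of $D$ at least one inclusion-minimal (i.e.\ flippable) triangle of $T$ is inverted between $T$ and $T'$. If such a triangle always exists, then flipping it decreases $d_{\mathrm{Ham}}(\cdot,T')$ by one without leaving the space, so induction on $d_{\mathrm{Ham}}$ produces a flip-path from $T$ to $T'$ of length $d_{\mathrm{Ham}}(T,T')$; conversely, if for some pair $T\ne T'$ no flippable triangle of $T$ is inverted, then the first flip of any geodesic from $T$ to $T'$ inverts a triangle that agreed in $T$ and $T'$, so the Hamming-distance to $T'$ rises by one and $d_{\mathrm{flip}}(T,T')\ge d_{\mathrm{Ham}}(T,T')+2$. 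I would then check this criterion family by family.

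For $n\le4$ I would invoke the lattice structure of the space. The rhombus tilings of $(a,b,c)$ are in bijection with the order ideals of the poset $[a]\times[b]\times[c]$ (plane partitions in a box), a flip toggling one unit cube and the sign of $(1^{(a)},2^{(b)},3^{(c)})$ being the indicator of the corresponding cube; the octogonal tilings of $(a,b,c,d)$ likewise form a distributive lattice whose covering relations are the flips and whose sign vectors are exactly the indicators of the associated order ideals. In both cases, for $T\leftrightarrow I$ and $T'\leftrightarrow I'$ the inverted triangles are precisely $I\triangle I'$; a maximal element of $I\setminus I'$ (resp.\ a minimal element of $I'\setminus I$) is flippable in $T$ and belongs to this symmetric difference; and descending to $I\cap I'$ and then ascending to $I'$ realizes a flip-path of length $|I\triangle I'|=d_{\mathrm{Ham}}(T,T')$. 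The genuine difficulty is $n=5$. For $D=(a,b,c,d,1)$ the single pseudoline of the fifth bundle perturbs the $n=4$ picture only mildly, and the criterion should still hold — by a structural argument, or by reducing the positive statement to finitely many computer-checked base cases through a heredity lemma; for $D=(2,2,2,2,2)$ the space is no longer a distributive lattice, and here I would simply run an exhaustive check over its \numprint{16832230} tilings, confirming that every pair of distinct tilings admits an inverted flippable triangle.

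For the negative direction I would produce two explicit counterexample pairs. For $D=(3,2,2,2,2)$ a computer search should return tilings $T\ne T'$ — small enough that their Hamming-distance and the absence of an inverted flippable triangle of $T$ can be rechecked by hand — witnessing $d_{\mathrm{flip}}(T,T')>d_{\mathrm{Ham}}(T,T')$; for the minimal six-direction zonotope $(1,1,1,1,1,1)$ the space is small enough to exhibit such a pair directly. These are then propagated by a heredity lemma: if $D'$ is obtained from $D$ by increasing multiplicities and/or adjoining directions, and the distances differ on $D$, they differ on $D'$. Indeed one extends a bad pair $(T,T')$ of $D$ to $D'$ by inserting the missing pseudolines along the boundary, identically in the two tilings; every newly created triangle then gets a sign independent of the tiling, so the Hamming-distance is unchanged and no new inversion appears, while every triangle inclusion-minimal in the enlarged $T$ was already inclusion-minimal, hence not inverted, in $T$. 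Since each $n=5$ zonotope with all multiplicities $\ge2$ except $(2,2,2,2,2)$ dominates $(3,2,2,2,2)$, and each zonotope with at least six directions dominates $(1,1,1,1,1,1)$, heredity reaches every remaining domain; combined with the positive cases this gives exactly the list of Theorem~\ref{th:main}. The hard part is the $n=5$ analysis — certifying that $(2,2,2,2,2)$ is still ``good'' with no lattice to lean on, and locating the smallest ``bad'' zonotope $(3,2,2,2,2)$ — which is precisely why I would bring in computer assistance there.
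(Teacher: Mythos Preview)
Your reduction to the ``flippable inverted triangle'' criterion is exactly the paper's Lemma~\ref{lem:main}, and your converse (the $+2$ argument) is correct and actually sharper than what the paper states explicitly. Your heredity argument for propagating counterexamples, and the identification of $(3,2,2,2,2)$ and $(1,1,1,1,1,1)$ as the minimal bad zonotopes, also match the paper. For $n=5$ the paper likewise brings in computer assistance, though by a different mechanism: rather than exhaustively scanning pairs, it explores an \texttt{AND}/\texttt{OR} tree that automates the pseudoline case analysis, and this is what lets it certify the infinite family $(a,b,c,d,1)$ --- note that the heredity you spell out propagates \emph{failure} upward, not success downward, so it does not by itself reduce $(a,b,c,d,1)$ to finitely many base cases.

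The genuine gap is your treatment of $n=4$. Four-bundle tiling spaces are \emph{not} distributive lattices; the paper says so explicitly after Lemma~\ref{lem:four_bundles}, citing \cite{CR}, and notes that only a graded-poset structure survives. There is therefore no bijection with order ideals of a poset under which the sign vector becomes an indicator vector, and your ``a maximal element of $I\setminus I'$ is flippable'' step has nothing to stand on. The paper's proof of the $n=4$ case is instead a direct pseudoline argument: starting from an inclusion-minimal inverted triangle, Lemmas~\ref{lem:config1} and~\ref{lem:config2} track how an extra pseudoline can cut it and successively locate smaller inverted triangles until an inclusion-minimal one is reached (the full case study is Lemma~\ref{lem:four_bundles} and Figure~\ref{fig:four_bundles}). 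This is precisely where the passage from three to four bundles becomes nontrivial, and it is the template that the computer then scales up for five bundles. Your lattice shortcut is legitimate for $n=3$ (and the paper acknowledges the distributive-lattice structure there), but for $n=4$ you need to replace it by an argument of this kind.
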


This theorem is established by several lemmas which depend on the number or size of bundles of pseudoline arrangements associated with rhombus tiling spaces (next subsections), as well as on the following general lemma:

\begin{lemma}\label{lem:main}
The flip- and Hamming-distances are equal on a rhombus tiling space if, for any two distinct tilings, each tiling admits an inclusion-minimal triangle with a different sign in the other tiling.
\end{lemma}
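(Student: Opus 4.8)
The plan is to argue by induction on the flip-distance $d$ between two tilings $T$ and $T'$. The base case $d=0$ is trivial, and the Hamming-distance is always a lower bound (as noted in the excerpt), so it suffices to show that under the stated hypothesis one can always find a flip in $T$ that strictly decreases the Hamming-distance to $T'$; then the flip-distance is at most the Hamming-distance, and equality follows from the lower bound. Concretely, I would fix $T\neq T'$, invoke the hypothesis to obtain an inclusion-minimal triangle $\tau$ of $T$ whose sign differs in $T'$, and observe that an inclusion-minimal triangle in a pseudoline arrangement corresponds exactly to a flippable hexagon in the tiling. Performing this flip produces a tiling $T''$; by definition of a flip, $\tau$ is the unique triangle whose sign changes between $T$ and $T''$.

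The key step is then to check that this flip actually brings $T$ closer to $T'$ in Hamming-distance, i.e.\ that $\tau$ now agrees in sign between $T''$ and $T'$ while no other triangle's sign is disturbed relative to $T'$. The first point holds because $\tau$ had different signs in $T$ and $T'$ but there are only two signs, so flipping $\tau$ in $T$ makes it agree with $T'$. The second point is immediate: the flip changes the sign of $\tau$ and of $\tau$ only, so every triangle other than $\tau$ has the same sign in $T''$ as in $T$, and hence its agreement/disagreement status with $T'$ is unchanged. Therefore the Hamming-distance from $T''$ to $T'$ is exactly one less than that from $T$ to $T'$.

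To close the induction I need $T''\neq T'$ to be handled by the inductive hypothesis, which is fine since the hypothesis of the lemma is a statement about the whole tiling space, so it applies to the pair $(T'',T')$ as well; when the Hamming-distance reaches zero we have $T''=T'$ because (as mentioned in the excerpt) the sign-string characterizes the tiling. Iterating, we connect $T$ to $T'$ by a number of flips equal to their Hamming-distance, giving flip-distance $\le$ Hamming-distance, and combined with the reverse inequality this proves equality.

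The main obstacle is the clean identification between inclusion-minimal triangles of the arrangement and flippable hexagons of the tiling — one must be sure that a minimal triangle genuinely corresponds to a hexagon made of three rhombi on which a flip is defined (so that the hypothesis produces an \emph{available} flip, not merely a triangle), and that performing that flip changes precisely the sign of that triangle. This was already asserted in Section~\ref{sec:distances} (``a flip inverts an inclusion-minimal triangle''), so I would simply cite it; the rest of the argument is the bookkeeping above. One subtlety worth a sentence is that the hypothesis is stated symmetrically (``each tiling admits\ldots''), so it can be applied from either endpoint, which is all that is needed here.
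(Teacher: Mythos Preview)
Your proposal is correct and follows essentially the same approach as the paper: use the hypothesis to find an inverted inclusion-minimal triangle, flip it (changing exactly that one triangle's sign), thereby decrease the Hamming-distance by one, and iterate until the tilings coincide; combined with the standing inequality Hamming $\leq$ flip, this yields equality. The only cosmetic quibble is that your framing as ``induction on the flip-distance'' is slightly off, since the iteration actually runs on the Hamming-distance; the paper simply phrases it as iterating $n$ times where $n$ is the Hamming-distance.
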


\begin{proof}
Consider two tilings at Hamming-distance $n>0$.
By hypothesis, the first one admits an inclusion-minimal triangle with a different sign in the other one.
This inclusion-minimal triangle can be inverted by a flip.
This flip changes the sign of the triangle without affecting any other triangle, hence decreases by one the Hamming-distance between both tilings.
Iterating this $n$ times yields a sequence of $n$ flips which reduces the Hamming-distance between tilings to zero, that is, which connect them.
This thus bounds by below the flip-distance between the two original tilings.
The claimed result follows, since one already know that the Hamming-distance is lesser than or equal to the flip-distance.
\end{proof}

\subsection{Three bundles}

We focus here on the tilings of zonotopes in $(a,b,c)_{a,b,c\geq 1}$, that is, three-bundle rhombus tiling spaces.

\begin{lemma}\label{lem:config1}
Consider a triangle $(S_1,S_2,S_3)$ whose $S_1$- and $S_2$-sides are cut by a pseudoline $S_4$.
If $(S_1,S_2,S_3)$ has a different sign in two tilings, then so does $(S_2,S_3,S_4)$ (in the same two tilings).
\end{lemma}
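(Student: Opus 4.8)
The plan is to work directly with the pseudoline arrangement and track how the three crossings of the triangle $(S_1,S_2,S_3)$ move under a flip that inverts it, using the extra pseudoline $S_4$ as a witness. Write $p_{12}=S_1\cap S_2$, $p_{13}=S_1\cap S_3$, $p_{23}=S_2\cap S_3$ for the three vertices of the original triangle, and let $q_1=S_1\cap S_4$ and $q_2=S_2\cap S_4$ be the two crossings on the $S_1$- and $S_2$-sides guaranteed by hypothesis. First I would observe that, since $(S_1,S_2,S_3)$ is inclusion-minimal in at least one of the two tilings (it is inverted by a single flip somewhere along the path, but more to the point we only need: a triangle that changes sign must, in one of the two tilings, be the minimal triangle of the flip that inverts it), the sign change of $(S_1,S_2,S_3)$ forces $p_{12}$ to cross over $S_3$; equivalently, in the two tilings the three crossings $p_{12},p_{13},p_{23}$ lie on opposite sides of the small hexagonal region bounded by $S_1,S_2,S_3$.

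Next I would argue locally: because $S_4$ cuts both the $S_1$-side and the $S_2$-side of this region but—by inclusion-minimality of $(S_1,S_2,S_3)$ in the relevant tiling—does not cut the $S_3$-side, the segment of $S_4$ between $q_1$ and $q_2$ passes through the interior of the flip hexagon, separating $p_{12}$ from the side carrying $p_{13}$ and $p_{23}$ (or vice versa, depending on orientation). Consequently, when $p_{12}$ moves across $S_3$ it must also move across $S_4$, i.e. the crossing of $S_2$ and $S_3$ changes side relative to $S_4$: the point $p_{23}=S_2\cap S_3$ passes from $(S_4)_+$ to $(S_4)_-$ or conversely. By the Definition of a signed triangle applied to the triple $(S_2,S_3,S_4)$ — with the ordering $i<j<k$ matching the indices of $S_2,S_3,S_4$, which is where I must be slightly careful about which of the three is the ``$k$'' pseudoline — this is exactly the statement that $(S_2,S_3,S_4)$ has a different sign in the two tilings.

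The remaining work is to handle the bookkeeping of which pseudoline plays the role of the third coordinate in the Definition (the sign of $(i^{(a)},j^{(b)},k^{(c)})$ is read off from whether $i^{(a)}\cap j^{(b)}$ lies in $k^{(c)}_\pm$), and to make the ``separating segment'' argument rigorous rather than purely picture-based. Concretely, I would set up a small case analysis on the cyclic order in which $S_1,S_2,S_3,S_4$ enter the flip hexagon — there are only a couple of combinatorially distinct configurations once we know $S_4$ meets the $S_1$- and $S_2$-sides but not the $S_3$-side — and in each case verify that the motion of $p_{12}$ across $S_3$ is accompanied by the motion of $p_{23}$ across $S_4$. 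I expect the main obstacle to be precisely this case analysis together with pinning down the orientation conventions so that ``different sign of $(S_2,S_3,S_4)$'' comes out with the correct index $k$; the underlying topological fact (a crossing forced to sweep through a region bounded on two of three sides by a fourth curve must also cross that fourth curve) is intuitively clear but needs the inclusion-minimality hypothesis to exclude the degenerate case where $S_4$ also cuts the $S_3$-side.
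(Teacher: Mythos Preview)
Your argument has a genuine gap: you never use the ambient hypothesis that this lemma lives in the three-bundle section, and that hypothesis is the whole point. Since $S_1,S_2,S_3$ pairwise cross they lie in three distinct bundles, and since $S_1,S_2,S_4$ pairwise cross they also lie in three distinct bundles; with only three bundles available, $S_3$ and $S_4$ are forced into the \emph{same} bundle. In particular $S_3$ and $S_4$ never cross, so the triple you are trying to analyze, $(S_2,S_3,S_4)$, is not a triangle at all. (The statement you were given contains a typo: the paper's own proof, and the caption of the accompanying figure, both conclude that $(S_1,S_2,S_4)$ is the triangle that changes sign.)

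Once you use the bundle constraint, the paper's argument is one line and avoids everything you set up: to change the sign of $(S_1,S_2,S_3)$ some flip in any connecting sequence must push $S_1\cap S_2$ across $S_3$; but $S_4$ separates $S_1\cap S_2$ from $S_3$ inside the triangle and can never itself be pushed across $S_3$ (same bundle $\Rightarrow$ no crossing), so $S_1\cap S_2$ must first be pushed across $S_4$, which is exactly a sign change of $(S_1,S_2,S_4)$. Note in particular that no inclusion-minimality of $(S_1,S_2,S_3)$ is assumed or needed; your first paragraph introduces that assumption without justification, and the lemma is in fact applied later (in Lemma~\ref{lem:three_bundles}) precisely to \emph{non}-minimal triangles in order to descend to a minimal one. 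Your ``separating segment'' intuition is morally the right picture, but it should be cashed out via the bundle argument rather than via an unwarranted minimality hypothesis and a case analysis on hexagon entries.
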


\begin{proof}
First, since $S_1$, $S_2$ and $S_3$ (resp. $S_1$, $S_2$ and $S_4$) pairwise cross, they are in different bundles; this ensures that $S_3$ and $S_4$ are in the same bundle because there are only three bundles.
Now, since $(S_1,S_2,S_3)$ has a different sign in two tilings, any sequence of flips transforming one tiling into the other must move $S_1\cap S_2$ across $S_3$.
Since $S_3$ cannot be moved across $S_4$ (two pseudolines in the same bundle cannot cross), one first needs to move $S_1\cap S_2$ across $S_4$, what changes the sign of $(S_1,S_2,S_4)$.
\end{proof}

\begin{figure}[hbtp]
\centering
\includegraphics[width=0.9\textwidth]{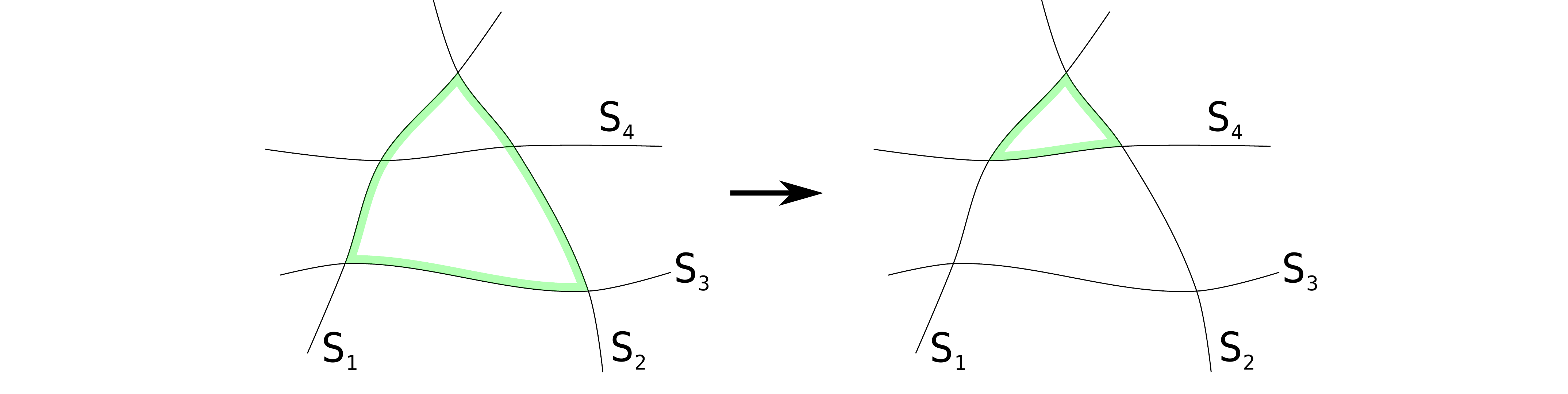}
\caption{If $(S_1,S_2,S_3)$ has a different sign in two tilings, then so does $(S_1,S_2,S_4)$ (Lemma~\ref{lem:config1}).}
\label{fig:config1}
\end{figure}

\begin{lemma}\label{lem:three_bundles}
For any two tilings of a zonotope $(a,b,c)_{a,b,c\geq 1}$, each tiling admits an inclusion-minimal triangle with different sign in the other tiling.
\end{lemma}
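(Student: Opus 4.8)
The plan is to prove the statement by contradiction: suppose two tilings $T$ and $T'$ of a zonotope $(a,b,c)$ differ, but $T$ has no inclusion-minimal triangle whose sign differs in $T'$. Since $T \neq T'$, the Hamming-distance between them is positive, so at least one triangle has a different sign; among all such triangles, I would pick one, say $(S_1,S_2,S_3)$, that is \emph{inclusion-minimal among the sign-differing triangles} — i.e. no sign-differing triangle is strictly contained in it. By our assumption this triangle is not inclusion-minimal as a triangle of the arrangement, so some pseudoline $S_4$ cuts into its interior. The goal is to show this forces a strictly smaller sign-differing triangle, contradicting the choice of $(S_1,S_2,S_3)$.

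The key geometric fact to exploit is that with only three bundles, $S_4$ lies in the same bundle as one of $S_1,S_2,S_3$ (since two crossing pseudolines are always in different bundles, and there are only three bundles to go around). A pseudoline entering the triangle through one side must exit through another, and it cannot cross the side belonging to its own bundle-mate; hence $S_4$ enters and exits through the two sides belonging to the \emph{other} two bundles. Relabeling so that $S_4$ is in the same bundle as $S_3$, the line $S_4$ cuts the $S_1$-side and the $S_2$-side of the triangle, which is exactly the hypothesis of Lemma~\ref{lem:config1}. That lemma then gives that $(S_1,S_2,S_4)$ also has a different sign in $T$ and $T'$. The main point is that $(S_1,S_2,S_4)$ sits strictly inside $(S_1,S_2,S_3)$: it shares the vertex $S_1 \cap S_2$ but its third side $S_4$ is strictly between $S_1 \cap S_2$ and $S_3$. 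So $(S_1,S_2,S_4)$ is a sign-differing triangle strictly contained in $(S_1,S_2,S_3)$, contradicting minimality.

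The remaining work is to make ``strictly contained'' precise and to handle the induction cleanly: rather than a single appeal, I would set up a descent argument. Starting from any sign-differing triangle, if it is not inclusion-minimal as an arrangement-triangle, apply the Lemma~\ref{lem:config1} step to replace it by a strictly smaller sign-differing triangle; since the number of crossings enclosed strictly decreases, this terminates, and the terminal triangle is an inclusion-minimal triangle of the arrangement with a different sign in the other tiling. Doing this from $T$'s side gives the desired triangle for $T$; by symmetry (swapping the roles of $T$ and $T'$) we likewise get one for $T'$.

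The main obstacle I anticipate is the bookkeeping in the descent step: verifying carefully that when $S_4$ cuts the $S_1$- and $S_2$-sides, the new triangle $(S_1,S_2,S_4)$ is genuinely ``inside'' the old one in a sense that strictly decreases a well-founded measure (e.g. the number of pseudoline crossings in the closed triangular region, or the number of tiles it bounds), and that this measure behaves monotonically so the process cannot cycle. Once that monotonicity is nailed down, the rest is a routine combination of Lemma~\ref{lem:config1} with the three-bundle constraint. Combined with Lemma~\ref{lem:main}, this lemma will yield the $n=3$ case of Theorem~\ref{th:main}.
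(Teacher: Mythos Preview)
Your proposal is correct and follows essentially the same route as the paper: start from a sign-differing triangle, use the three-bundle constraint to see that any cutting pseudoline $S_4$ must share a bundle with one side, and then invoke Lemma~\ref{lem:config1} to produce a strictly smaller sign-differing triangle, iterating until inclusion-minimality is reached. Your write-up is more explicit than the paper's (which just says ``the previous lemma yields a new triangle \ldots\ this can be iterated''), and your caution about a well-founded measure is warranted but easily discharged---the number of crossings enclosed does the job.
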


\begin{proof}
Since the tilings are distinct, they gives a different sign to some triangle.
If such a triangle is not inclusion-minimal, then it is cut by a pseudoline.
The previous lemma yields a new triangle, with different sign in the two tilings, included in the first triangle.
This can be iterated up to get an inclusion-minimal triangle.
\end{proof}

The two previous lemmas, together with Lemma~\ref{lem:main}, yield the first case of Theorem~\ref{th:main}.
Actually, in this three-bundle case, there is a much better result on the structure of the rhombus tiling space.
Indeed, this case corresponds to the so-called {\em dimer tilings}, where the tiling space is known to have a structure of {\em distributive lattice} \cite{propp}.

\subsection{Four bundles}

We focus here on the tilings of zonotopes $(a,b,c,d)_{a,b,c,d\geq 1}$, that is, four-bundle rhombus tiling spaces.

\begin{lemma}\label{lem:config2}
Consider a triangle $(S_1,S_2,S_3)$ whose $S_1$- and $S_2$-sides are cut by a pseudoline $S_4$, which also cuts $S_3$.
If $(S_1,S_2,S_3)$ has a different sign in two tilings, then so does $(S_2,S_3,S_4)$ or both $(S_1,S_3,S_4)$ and $(S_1,S_2,S_4)$.
\end{lemma}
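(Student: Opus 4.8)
The plan is to mimic the argument of Lemma~\ref{lem:config1}, but now track the extra crossing $S_3\cap S_4$ that exists because $S_4$ cuts $S_3$ as well. As before, suppose $(S_1,S_2,S_3)$ has a different sign in the two tilings. Then any sequence of flips transforming one tiling into the other must move the crossing $S_1\cap S_2$ across the pseudoline $S_3$. Since $S_4$ cuts both sides of the triangle $(S_1,S_2,S_3)$, the point $S_1\cap S_2$ lies on one side of $S_4$ while part of $S_3$ lies on the other side; more precisely, the portion of $S_3$ inside the triangle (the part $S_1\cap S_2$ could conceivably be pushed through) is separated from $S_1\cap S_2$ by the arc of $S_4$. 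Unlike in the three-bundle case, $S_3$ and $S_4$ are now in different bundles, so $S_3$ \emph{can} be moved across $S_4$; hence there are two ways for the flip sequence to create the geometric configuration needed to push $S_1\cap S_2$ through $S_3$.

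The key step is the case analysis on how the obstruction posed by $S_4$ is resolved along the flip sequence. Either (i) at some point the crossing $S_1\cap S_2$ is pushed across $S_4$ before it is pushed across $S_3$ — this inverts $(S_1,S_2,S_4)$ — and then, because $S_1\cap S_2$ and $S_3$ still have $S_4$ between them in the relevant region unless $S_3\cap S_4$ is also moved, one further argues that $S_1\cap S_3$ must be moved across $S_4$ as well, inverting $(S_1,S_3,S_4)$; so both these triangles change sign. Or (ii) the crossing $S_3\cap S_4$ is moved first so that $S_3$ slides past $S_2$ on the far side of $S_4$, which forces $S_2\cap S_3$ to be pushed across $S_4$, inverting $(S_2,S_3,S_4)$. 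Combining the two alternatives gives exactly the disjunction in the statement: $(S_2,S_3,S_4)$ is inverted, or both $(S_1,S_3,S_4)$ and $(S_1,S_2,S_4)$ are.

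The main obstacle will be making the planar/topological separation argument rigorous — namely, showing that with $S_4$ cutting all three sides of the triangle, the crossing $S_1\cap S_2$ genuinely cannot reach the arc of $S_3$ bounding the minimal triangle without one of the enumerated crossings ($S_1\cap S_2$, $S_1\cap S_3$, or $S_2\cap S_3$) passing through $S_4$. This is where one must use that a flip moves a single crossing across a single pseudoline and does not create or destroy crossings, so the incidence pattern of the four curves evolves through a controlled sequence of elementary moves. I expect this to be handled by looking at the (small) sub-arrangement of the four pseudolines $S_1,S_2,S_3,S_4$ and checking that the only elementary moves that can eventually invert $(S_1,S_2,S_3)$ are those that, along the way, also invert the triangles claimed; an accompanying figure analogous to Fig.~\ref{fig:config1} would make the two subcases transparent.
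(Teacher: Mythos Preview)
Your plan is correct and follows essentially the same approach as the paper's proof: both reduce to a case analysis, within the four-line sub-arrangement $\{S_1,S_2,S_3,S_4\}$, on how the obstruction posed by $S_4$ to the inversion of $(S_1,S_2,S_3)$ is removed along the flip sequence. The paper phrases the dichotomy as ``$S_4$ exits the triangle in one step (inverting $(S_2,S_3,S_4)$) versus two steps (inverting $(S_1,S_3,S_4)$ then $(S_1,S_2,S_4)$)'', which is exactly the split you describe from the dual viewpoint of which crossing is first moved across $S_4$.
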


\begin{proof}
To change the sign of $(S_1,S_2,S_3)$, we need to make it inclusion-minimal (in order to perform a flip which moves $S_1\cap S_2$ across $S_3$), that is, to continuously deform pseudolines so that $S_4$ does not any more intersect $(S_1,S_2,S_3)$.
This can be done in two way: either in one step by moving $S_2\cap S_3$ across $S_4$, or in two steps, by moving, first, $S_1\cap S_4$ across $S_3$ (so that $S_4$ crosses $S_1$- and $S_2$-sides of $(S_1,S_2,S_3)$) and, second, $S_1\cap S_2$ across $S_4$.
The first way changes the sign of $(S_2,S_3,S_4)$, while the second way changes the sign of both $(S_1,S_3,S_4)$ and $(S_1,S_2,S_4)$.
\end{proof}

\begin{figure}[hbtp]
\centering
\includegraphics[width=0.9\textwidth]{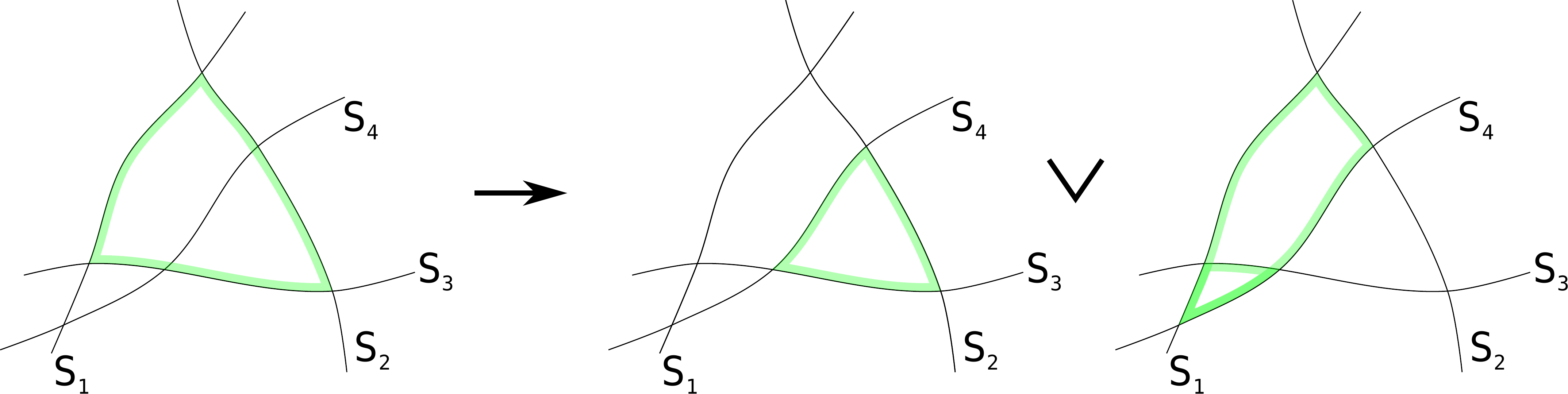}
\caption{If $(S_1,S_2,S_3)$ has a different sign in two tilings, then so does $(S_2,S_3,S_4)$ or both $(S_1,S_3,S_4)$ and $(S_1,S_2,S_4)$ (Lemma~\ref{lem:config2}).}
\label{fig:config2}
\end{figure}

\begin{lemma}\label{lem:four_bundles}
For any two tilings of a zonotope $(a,b,c,d)_{a,b,c,d\geq 1}$, each tiling admits an inclusion-minimal triangle with a different sign in the other tiling.
\end{lemma}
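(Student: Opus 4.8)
The goal mirrors Lemma~\ref{lem:three_bundles}: start from any triangle that the two tilings sign differently, and if it is not inclusion-minimal, produce a smaller one (still signed differently) and iterate. The difference is that with four bundles we can no longer invoke the clean dichotomy of Lemma~\ref{lem:config1} — a pseudoline $S_4$ cutting the $S_1$- and $S_2$-sides of $(S_1,S_2,S_3)$ may well also cut $S_3$, since $S_4$ can now live in the fourth bundle rather than in $S_3$'s bundle. The replacement tool is Lemma~\ref{lem:config2}: if $(S_1,S_2,S_3)$ is not inclusion-minimal because some pseudoline $S_4$ cuts into it, then either $(S_2,S_3,S_4)$ is signed differently, or else \emph{both} $(S_1,S_3,S_4)$ and $(S_1,S_2,S_4)$ are. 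In every case we obtain a triangle, strictly contained in $(S_1,S_2,S_3)$, which the two tilings sign differently. So the argument is again a well-founded descent on the area (equivalently, on the number of pseudoline crossings) enclosed by the triangle.

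First I would make precise the inclusion order: write $T' \subsetneq T$ for triangles $T',T$ when the closed region bounded by the three sides of $T'$ is strictly contained in that bounded by $T$; since each crossing inside a triangle contributes to its ``weight'', this order is well-founded, and a triangle is inclusion-minimal exactly when it is minimal for it. Next I would record the trivial base case: since the two tilings are distinct, some triangle is signed differently (this is the characterization of a tiling by its sign-string mentioned after Definition~3). Then comes the inductive step. Take a differently-signed triangle $T=(S_1,S_2,S_3)$ of minimal weight among all differently-signed ones; if it is already inclusion-minimal we are done. Otherwise some pseudoline $S_4$ enters $T$. I would case on how $S_4$ meets $T$: either it cuts two sides and misses the third (the configuration of Lemma~\ref{lem:config1}, whose proof in fact only used ``$S_3,S_4$ in the same bundle'', so it still applies here whenever $S_4$ and one side-pseudoline share a bundle), or it cuts all three sides (the configuration of Lemma~\ref{lem:config2}). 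Relabel so that $S_4$ cuts the $S_1$- and $S_2$-sides; in the first case Lemma~\ref{lem:config1} gives that $(S_1,S_2,S_4)$ is differently signed, and it sits strictly inside $T$; in the second, Lemma~\ref{lem:config2} gives a differently-signed triangle among $(S_2,S_3,S_4)$, $(S_1,S_3,S_4)$, $(S_1,S_2,S_4)$, each of which is strictly inside $T$. Either way the minimality of $T$ is contradicted, so the minimal differently-signed triangle is inclusion-minimal, which is exactly the claim. Since the two tilings play symmetric roles, ``each tiling'' admits such a triangle.

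The main obstacle, and the place to be careful, is the geometric bookkeeping of the case split on how the intruding pseudoline $S_4$ meets $T$, and the verification that in each subcase the triangle produced is genuinely of strictly smaller weight than $T$ — one must check that $S_4$'s crossings with $S_1,S_2,S_3$ (and the relevant crossing among $S_1,S_2,S_3$) all lie inside $T$, so that the new triangle omits at least the crossing that $T$ had ``on its boundary'' and gains nothing outside. A subtle point is that $S_4$ might cut only \emph{one} side of $T$ without entering: that cannot happen for a pseudoline that actually intersects the interior of $T$, since a curve entering a triangular region must leave it, crossing the boundary an even number of times (counting multiplicity), and two pseudolines cross at most once — so $S_4$ cuts exactly two sides, or (if it passes through a vertex region) is handled as the all-three-sides case; I would state this boundary-crossing parity remark explicitly to rule out the spurious configuration. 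With that in hand the descent is immediate, and combined with Lemma~\ref{lem:main} it yields the $(a,b,c,d)$ case of Theorem~\ref{th:main}.
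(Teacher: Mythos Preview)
Your descent argument has a genuine gap in the Lemma~\ref{lem:config2} branch. When $S_4$ cuts the $S_1$- and $S_2$-sides of $T=(S_1,S_2,S_3)$ and also crosses $S_3$, the crossing $S_3\cap S_4$ lies \emph{outside} $T$ (the pseudoline $S_4$ enters and exits $T$ through $S_1$ and $S_2$, so it meets $S_3$ only beyond one of the vertices $S_1\cap S_3$ or $S_2\cap S_3$). Consequently the triangles $(S_1,S_3,S_4)$ and $(S_2,S_3,S_4)$ each have a vertex outside $T$ and are \emph{not} contained in $T$; your assertion that ``each of which is strictly inside $T$'' is false, and with it the well-founded descent collapses. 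Only $(S_1,S_2,S_4)$ sits inside $T$, but Lemma~\ref{lem:config2} does not guarantee that this particular triangle is inverted---in the first disjunct one only gets $(S_2,S_3,S_4)$. So from an inclusion-minimal \emph{inverted} triangle $T$ you cannot, in general, produce a strictly smaller inverted triangle this way; there is no contradiction.

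This is precisely why the paper's proof is a longer case analysis rather than a one-step descent. Starting from $(S_1,S_2,S_3)$ minimal among inverted triangles, the paper uses that minimality to force $(S_1,S_2,S_4)$ non-inverted, hence (via Lemma~\ref{lem:config2}) both $(S_1,S_3,S_4)$ and $(S_2,S_3,S_4)$ inverted; but since these stick out of $T$, one must then carefully choose $S_4$ extremal in its bundle, and possibly introduce further pseudolines $S_1'$, $S_2'$ (closest in their bundles to $S_3\cap S_4$) and iterate Lemmas~\ref{lem:config1}--\ref{lem:config2} until an inclusion-minimal inverted triangle is exhibited among $(S_1,S_3,S_4)$, $(S_1',S_3,S_4)$, $(S_2',S_3,S_4)$, $(S_1',S_2',S_3)$, $(S_1',S_2',S_4)$. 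None of these is forced to lie inside the original $T$, so the argument is not a descent on area or crossing count; it is a finite case tree that terminates because the successive extremal choices forbid more and more bundles from cutting the candidate triangle.
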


\begin{proof}
Let $(S_1,S_2,S_3)$ be a inclusion-minimal triangle among the {\em inverted} triangles, that is, those with a different sign is the two tilings.
It thus cannot be cut a pseudoline in the bundles of $S_1$, $S_2$ or $S_3$.
If it is not cut by a pseudoline, then the lemma is proven.\\

Otherwise, let $S_4$ be a pseudoline which cuts $(S_1,S_2,S_3)$, say on its $S_1$- and $S_2$-sides.
It also crosses $S_3$ (otherwise $(S_1,S_2,S_4)$ would be inverted by Lemma~\ref{lem:config1}, and included in $(S_1,S_2,S_3)$, contradicting its minimality).
Moreover, we choose $S_4$ so that, in its bundle, it the closest pseudoline to $S_1\cap S_3$, that is, no other pseudoline of its bundle cuts $(S_1,S_3,S_4)$.
By minimality of $(S_1,S_2,S_3)$, $(S_1,S_2,S_4)$ is not inverted.
Lemma~\ref{lem:config2} thus ensures that both $(S_1,S_3,S_4)$ and $(S_2,S_3,S_4)$ are inverted.
If $(S_1,S_3,S_4)$ is inclusion-minimal, then the lemma is proven (Fig.~\ref{fig:four_bundles_b}).\\

Otherwise, $(S_1,S_3,S_4)$ is cut by a pseudoline.
This pseudoline cannot be in the bundle of $S_4$ (because of the way $S_4$ was chosen).
A pseudoline in the bundle of $S_3$ cannot cross the $S_1$-side of $(S_1,S_3,S_4)$, as a subset of the $S_1$-side of $(S_1,S_2,S_3)$.
It can thus only cross the $S_4$-side.
But this is impossible because two pseudolines cross at most once (thus the pseudoline cannot both enter and exit $(S_1,S_3,S_4)$).
It is thus in the bundles of either $S_1$ or $S_2$.
Assume that there is such a pseudoline in the bundle of $S_1$, and take the closest one to $S_3\cap S_4$, say $S_1'$.
By Lemma~\ref{lem:config1}, $(S_1',S_3,S_4)$ is inverted.
If it is not cut by a pseudoline (which is necessarily in the bundle of $S_2$), then the lemma is proven (Fig.~\ref{fig:four_bundles_c}).\\

Otherwise, let $S_2'$ be the pseudoline in the bundle of $S_2$ which is the closest to $S_3\cap S_4$.
Since $(S_2,S_3,S_4)$ is inverted, Lemma~\ref{lem:config1} ensures that $(S_2',S_3,S_4)$ is inverted.
If it is not cut by a pseudoline, then the lemma is proven (Fig.~\ref{fig:four_bundles_d}).\\

Otherwise, consider a pseudoline which cuts $(S_2',S_3,S_4)$.
It cannot be in the bundle of $S_2$, $S_3$ or $S_4$, because of the way they were chosen (for the bundle of $S_3$ this is a bit more delicate: remark that a pseudoline in this bundle which would cut $(S_2,S_3,S_4)$ would also necessarily cut $S_4$, entering in this way $(S_1,S_3,S_4)$, but it then cannot exit $(S_1,S_3,S_4)$: crossing twice $S_4$ is forbidden, crossing $S_3$ also, and the $S_1$-side of $(S_1,S_3,S_4)$ cannot be crossed by any pseudoline).
Hence, a pseudoline which cuts $(S_2,S_3,S_4)$ is in the bundle of $S_1$.
More precisely, it is necessarily $S_1'$ itself, because of the way it was chosen.
Since $(S_2',S_3,S_4)$ is inverted, Lemma~\ref{lem:config2} ensures that at least one of $(S_1',S_2',S_3)$ and $(S_1',S_2',S_4)$ is inverted.
But the successive choices of $(S_1,S_2,S_3)$, $S_4$, $S_1'$ and $S_2'$ ensures that no more pseudoline can cut $(S_1',S_2',S_3)$ or $(S_1',S_2',S_4)$.
Thus, at least one of these two triangles is inclusion-minimal and inverted: the lemma is proven (Fig.~\ref{fig:four_bundles_e}).
\end{proof}

\begin{figure}[hbt]
\centering
\subfloat[$(S_1,S_3,S_4)$ is inverted]{\label{fig:four_bundles_b}\includegraphics[width=0.48\textwidth]{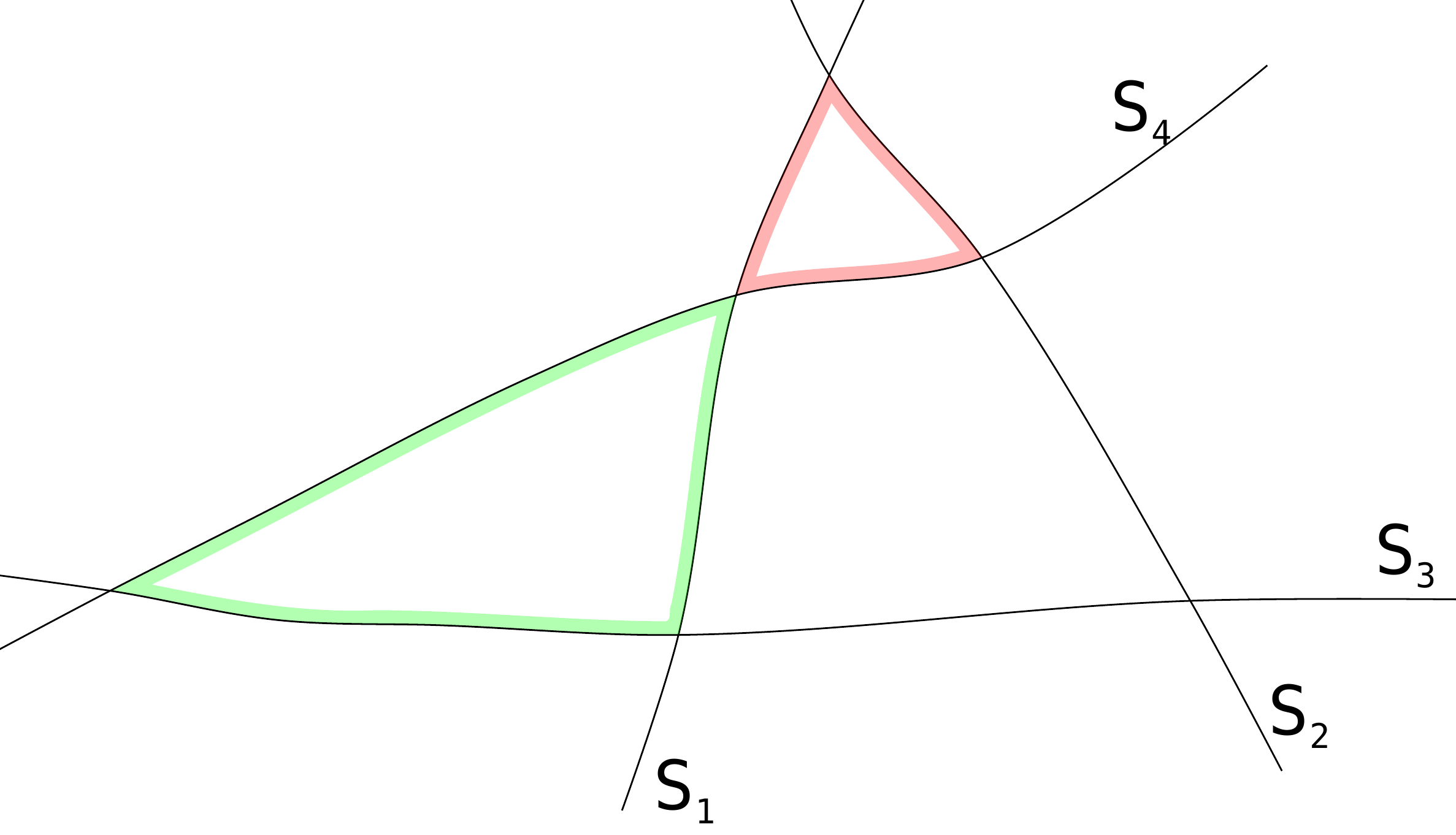}}
\hfill
\subfloat[$(S_1',S_3,S_4)$ is inverted]{\label{fig:four_bundles_c}\includegraphics[width=0.48\textwidth]{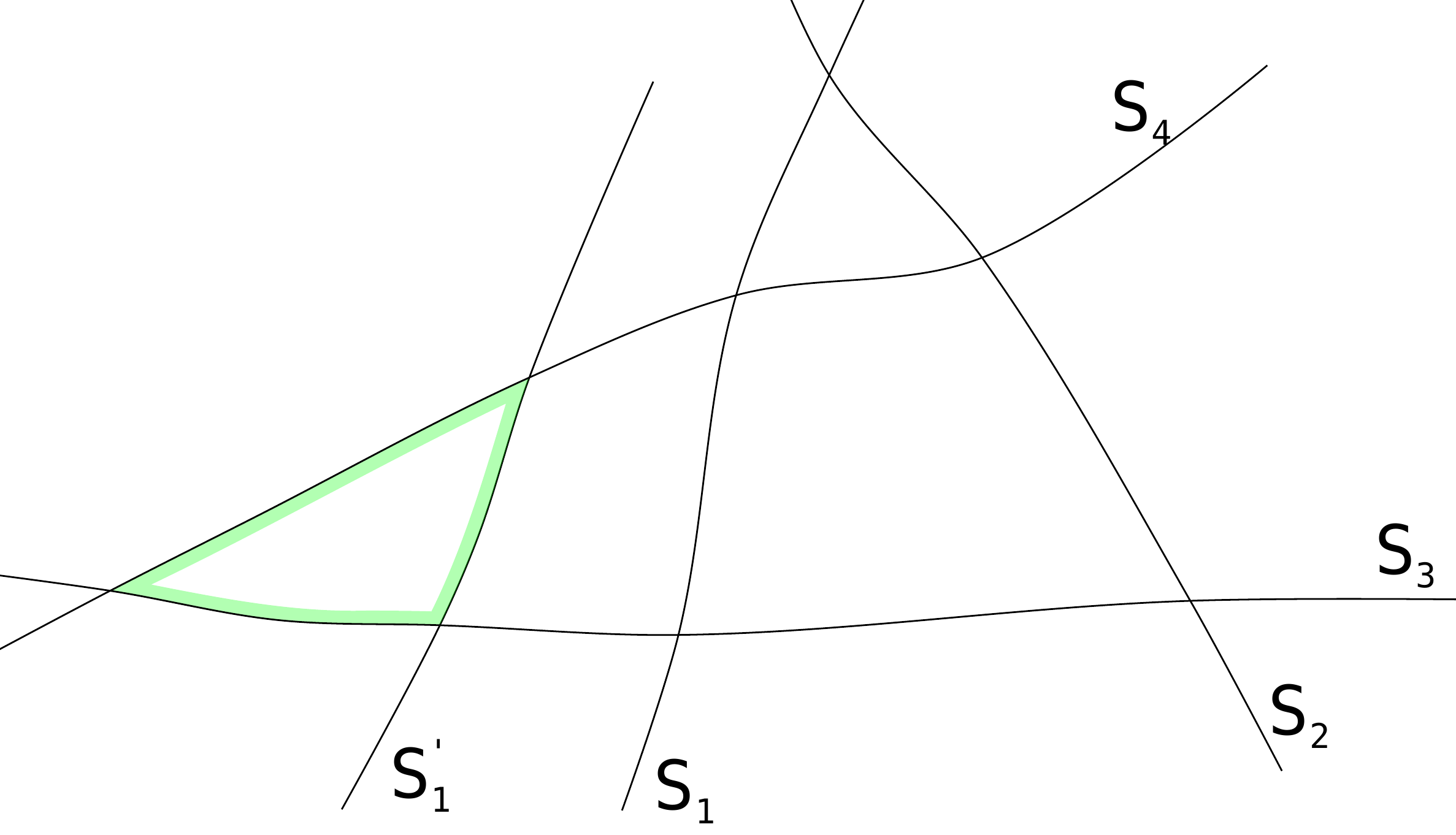}}\\
\subfloat[$(S_2',S_3,S_4)$ is inverted]{\label{fig:four_bundles_d}\includegraphics[width=0.48\textwidth]{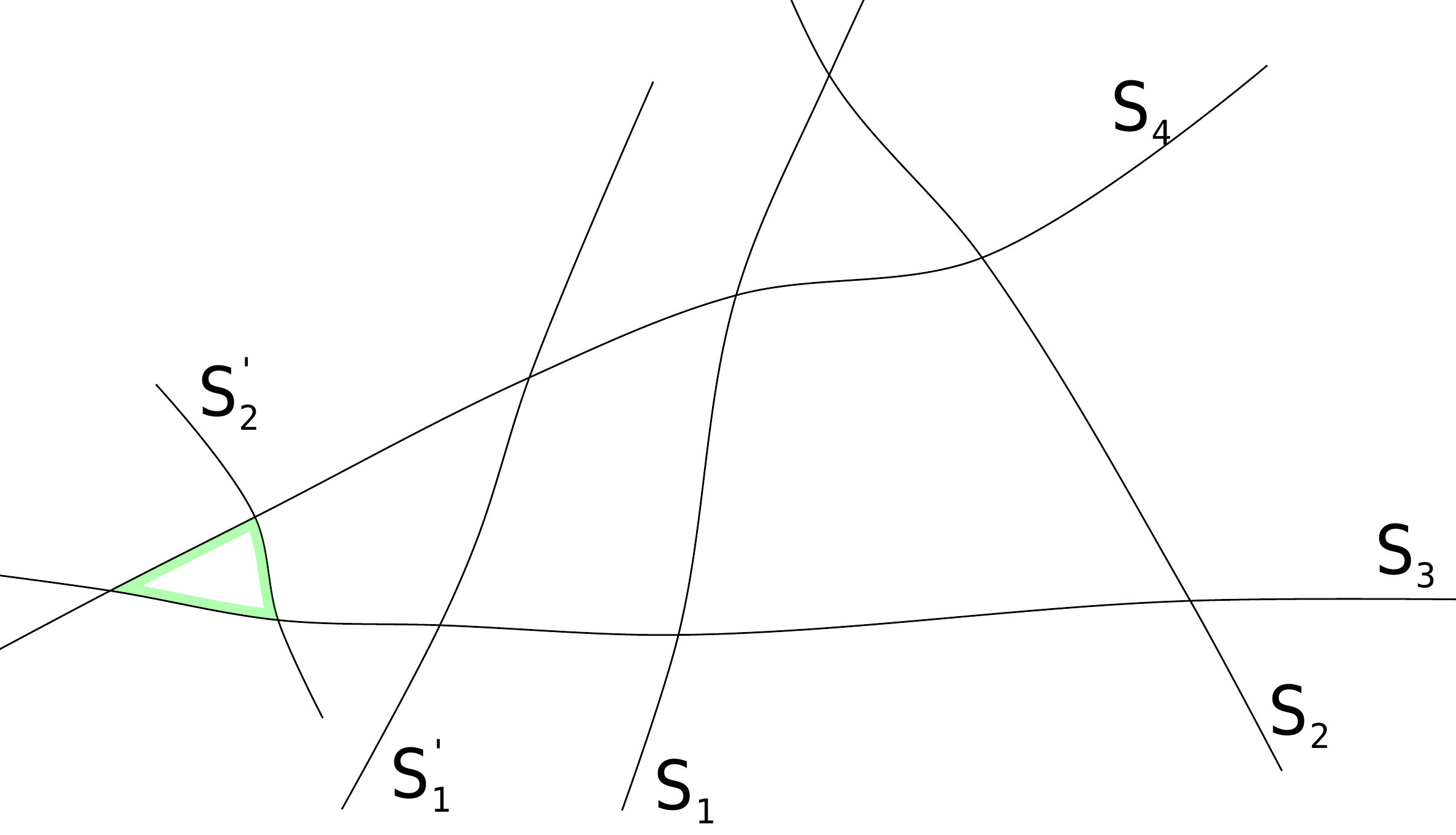}}
\hfill
\subfloat[$(S_1',S_2',S_3)$ or $(S_1',S_2',S_4)$ is inverted]{\label{fig:four_bundles_e}\includegraphics[width=0.48\textwidth]{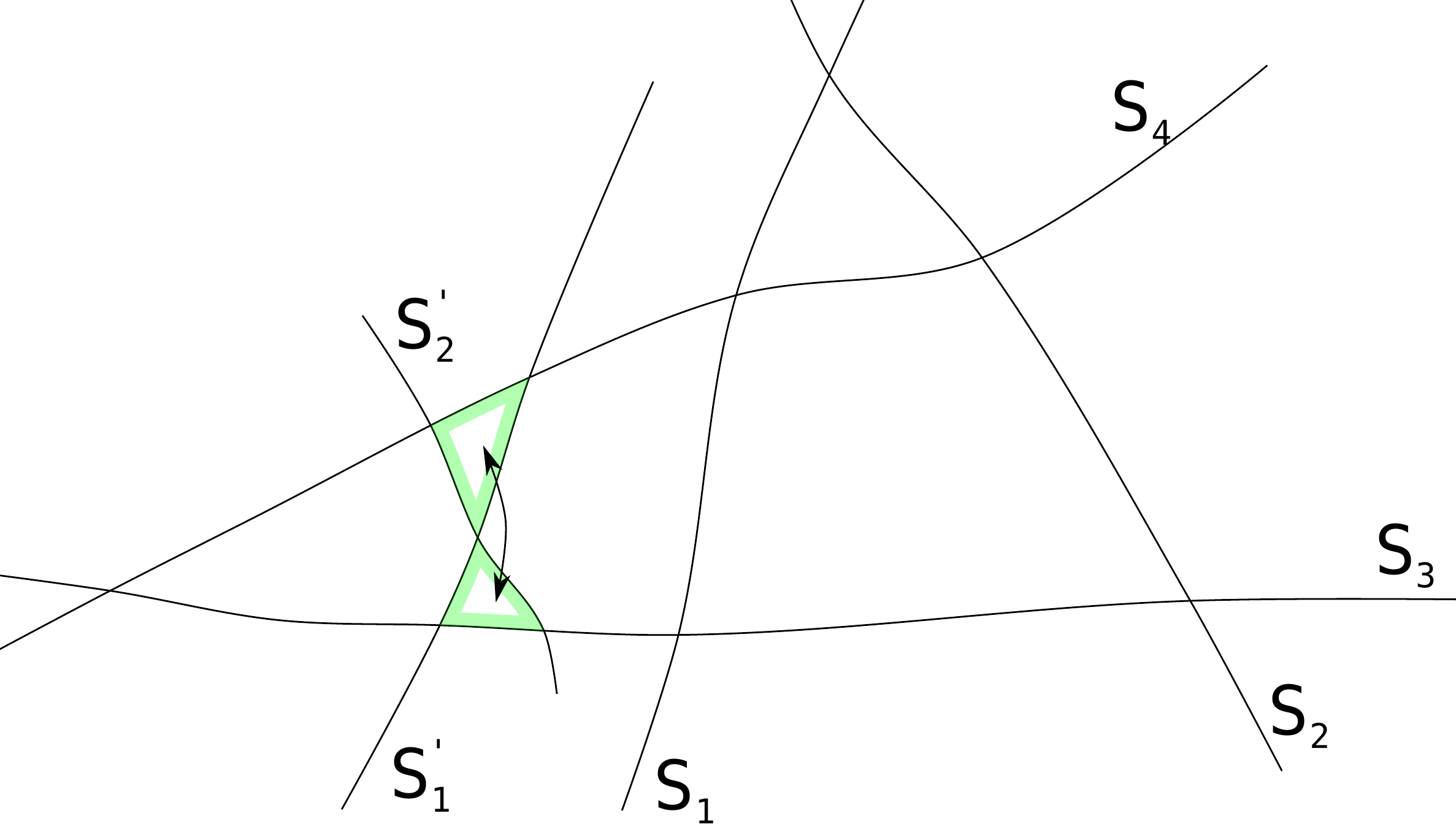}}
\caption{The case study of Lemma~\ref{lem:four_bundles}.}
\label{fig:four_bundles}
\end{figure}

The two previous lemmas, together with Lemma~\ref{lem:main}, yield the first case of Theorem~\ref{th:main}.
On the contrary to three-bundle rhombus tiling spaces, four-bundle ones do not admit a structure of distributive lattice, but only of graded poset \cite{CR}.
The result here obtained thus really yields a deeper insight on the structure of rhombus tiling spaces.

\subsection{Five bundles}

We focus here on the tilings of zonotopes $(a,b,c,d,e)_{a,b,c,d,e\geq 1}$, that is, five-bundle rhombus tiling spaces.
In the conference version of this paper \cite{dgci}, we conjectured that equality of flip- Hamming-distances holds in this case\footnote{Five-bundle tilings were there called $5\to 2$ tilings -- an equivalent terminology.}.
However, mimicking the proof of Lemma~\ref{lem:four_bundles} led to a huge case study that we were not able to carry out.
We therefore decided to write a program to carry out this case study.\\

The general principle of this program is the following.
It starts with a triangle which is assumed to be inverted w.r.t. two tilings of the space (such a triangle exists for any two different tilings).
It then considers all the way an additional pseudoline can cut this triangle, and recursively searches for either an inverted triangle which cannot any more be cut by any pseudoline, or a configuration with none of its inclusion-minimal triangles inverted.
Lemma~\ref{lem:main} then ensures that flip- and Hamming distances are equal in the former case, or different in the latter case.\\

Let us here describe precisely how the program works, since it is not that straightforward.
First, let us define the following rooted tree:
\begin{enumerate}
\item each node corresponds to a pseudoline arrangement, where triangles are labelled ``inverted'' or ``non-inverted''.
\item nodes can be of type \texttt{AND} or \texttt{OR}, with nodes \texttt{AND} having a distinguished ``inverted'' inclusion-minimal triangle;
\item the root is a node \texttt{AND} corresponding to a single (distinguished) triangle;
\item the children of a node \texttt{AND} are nodes \texttt{OR}, one for each way a pseudoline can be added which cut the distinguished triangle and the created triangles can be consistently\footnote{The labelling of a pseudoline arrangement is consistent if there exists another pseudoline arrangement such that the triangles with a different sign in both arrangements are exactly the one labelled ``inverted''.} labelled ``inverted'' or ``non-inverted'';
\item the children of a node \texttt{OR} are nodes \texttt{AND}, one for each ``inverted'' inclusion-minimal triangle which do not appear in the parent node \texttt{AND}, with this ``inverted'' triangle being distinguished.
\end{enumerate}
We then {\em evaluate} this tree as follows: leaves \texttt{AND} and \texttt{OR} are respectively eva\-lua\-ted to \texttt{TRUE} and \texttt{FALSE}, while nodes \texttt{AND} and \texttt{OR} are respectively evaluated to the conjunction and the disjunction of the evaluations of their children.
It is not hard to see that if the root is evaluated to \texttt{TRUE}, then any pair of different tilings shall admit an inverted minimal-inclusion triangle.\\

The above defined tree has however no reason to be finite (remind that we are not considering a particular zonotope, but an infinite family, that is, there is a restriction of the number of pseudoline bundles, but not on the number of pseudolines).
It nevertheless may admit a subtree with the same root, such that if the root of this subtree is evaluated to \texttt{TRUE}, then the root of the general tree can only be evaluated to \texttt{TRUE}.
Indeed, whenever a child of a node \texttt{OR} is evaluated to \texttt{TRUE}, the evaluation of the other children does not matter: this allows to ignore large parts of the general tree, hopefully enough to obtain a finite subtree.
Such a subtree, in the case its root is evaluated to \texttt{TRUE}, provides a proof that any pair of different tilings admit an inverted minimal-inclusion triangle (hence that flip- and Hamming-distances are equal, by Lemma~\ref{lem:main}): we call it a {\em proof-subtree} (there may be several ones).\\

The above approach can only prove equality since it focus on the evaluation to \texttt{TRUE} of the root.
However, the leaves \texttt{OR} discovered while searching a proof-subtree are good counterexample candidates.
Indeed, such a leaf corresponds to a pseudoline arrangement where, at least locally, no inclusion-minimal triangle is inverted.
If this also holds through the whole pseudoline arrangement, then Lemma~\ref{lem:main} yields that flip- and Hamming-distances are not equal.
Thus, while searching for a proof-subtree, our program also outputs leaves \texttt{OR}, and we separatly check whether they are counterexample or not.\\

It remains to explain the strategy of our program for skipping large parts of the general tree while exploring it.
There are two main principles.\\

The first principle is that, when adding a pseudoline $S$ which cuts a distinguished triangle, we systematically assume that this pseudoline is, among the pseudolines in the same bundle which cross the same edges of the triangle, the closest one to one of the three vertex of the triangle (this is exactly what we did several times in the proof of Lemma~\ref{lem:four_bundles}, actually, the program just mimics the proof in the more complicated five-bundle case).
This allows to endow sections of the triangle edges with a label specifying that they cannot be cut by a pseudoline of this bundle.
While recursively adding new pseudolines during the tree exploration, we add more and more such labels: this is precisely what leads to nodes where no pseudoline crossing the distinguished triangle can be added, that is, leaves \texttt{AND} evaluated to \texttt{TRUE}.\\

The second principle is to combine depth-first and breadth-first searches according to a well-chosen trade-off.
On the one hand, since we just want to find, for each node \texttt{OR}, a child whose evaluation is \texttt{TRUE}, it seems better to perform a breadth-first search that can be stopped as soon as such a child is found.
But this breadth-first search is untractable in practice, because this leads to examine too much cases.
However, since these cases are ``more or less promising'' (depending on how looks the pseudoline arrangement and in which part of it an inverted inclusion-minimal triangle is searched), it is worth performing a depth-first search on the most ``promising'' ones.
Of course, it is very hard to estimate how much a given case is ``promising''.
We therefore proceed as follow: we associate a {\em weight} with each node whose children have not all been explored -- the most ``promising'' seems a node, according to some ``homemade'' heuristics, the heavier it is -- and we perform each step of the search on the heavier node.
Weights are updated at each step.\\

We shall also mention that, when searching for children of a node \texttt{AND}, the program does not compute exactly the consistent labelling (this is very time-consuming).
Instead of this, it considers all the labelling which are consistent w.r.t. Lemmas \ref{lem:config1} and \ref{lem:config2}.
This leads to possibly consider some extra labelling (which are not detected as inconsistent), but this can only prevent the corresponding \texttt{AND} node to be evaluated to \texttt{TRUE}, that is, we can miss a proof-subtree but not obtain a false proof-subtree.\\

This concludes our description of the program (which however contains several other tricky optimizations -- the interested reader shall contact Michael Rao for more details).
Let us now discuss the results that we obtained.\\

The main result is rather disappointing: this is a counterexample to our conjecture.
The program indeed found\footnote{That is, while running indefinitely, the program output a leaf \texttt{OR} corresponding to a tiling of this pair, we then search for the tiling such that the triangles inverted w.r.t. these two tilings are exactly those labelled ``inverted'' in the leaf \texttt{OR}, and we check that no such triangle is inclusion-minimal.} a {\em deficient pair} (that is, a pair of tilings whose flip- and Hamming-distances differ) in the zonotope $(3,2,2,2,2)$, namely the pair depicted on Fig.~\ref{fig:deficient_5}.
Let us stress that, once a pair is found, it can be easily checked that it is a counterexample: it indeed suffices to check that each inclusion-minimal triangle of one of the tilings has the same sign in the other tiling (this can be valuable for readers who would dubiously regard computer-assisted proofs).\\

\begin{figure}[hbtp]
\centering
\includegraphics[width=0.9\textwidth]{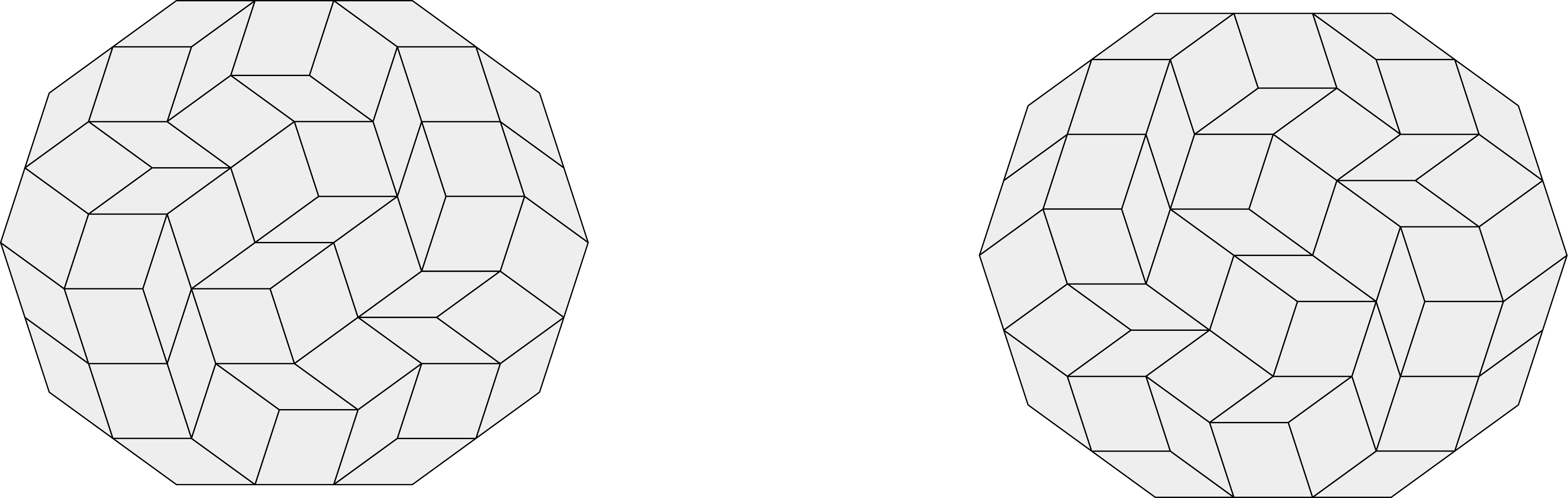}
\caption{A deficient pair of the zonotope $(3,2,2,2,2)$: these tilings are at Hamming-distance $32$ but at flip-distance $34$. This deficiency can be checked by hand: the $10$ inclusion-minimal triangle of one tiling have indeed the same sign in the other tiling.}
\label{fig:deficient_5}
\end{figure}

The program also found several other deficient pairs in this zonotope ($9$ at all, up to symmetries), but there are possibly other ones (that the program can have ignored while searching for a proof-subtree).
An exhaustive checking of the \numprint{139106980443312324} pairs of this tiling space shall give all the deficient pairs, but it is beyond our computational resources.
Let us mention that the flip- and Hamming-distances of all the found pairs differ only by $2$.\\

Since the above counterexample can be easily extended to larger zonotope, the only remaining cases, among five-bundle rhombus tiling spaces, are those associated with the zonotopes $(2,2,2,2,2)$ and $(a,b,c,d,1)_{a,b,c,d\geq 1}$.
The constraints on pseudoline number can be easily added to the program, and in both cases it found a proof-subtree.
The one found for the zonotope $(2,2,2,2,2)$ has \numprint{2034287} nodes, with \numprint{562480} leaves, and the one found for zonotopes $(a,b,c,d,1)_{a,b,c,d\geq 1}$ has \numprint{7045} nodes, with \numprint{1910} leaves.
We (hardly) achieved a double-checking for the zonotope $(2,2,2,2,2)$, by performing an exhaustive search among the \numprint{283323966772900} pairs of the tiling space.
Such an exhaustive search is of course impossible for the infinite family of zonotopes $(a,b,c,d,1)_{a,b,c,d\geq 1}$.\\

\noindent The following lemma summarizes all the above computer-obtained results:

\begin{lemma}
The rhombus tiling space of a five-bundle zonotope has no deficient tiling pair if and only if there is a size $1$ bundle, or only size $2$ bundles.
\end{lemma}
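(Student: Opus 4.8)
The plan is to split the statement into its two implications, the second (``if'') being the one that genuinely needs the computer.

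First I would establish \emph{necessity}: a five-bundle zonotope $(a_1,\dots,a_5)$ which has no size-$1$ bundle and is not $(2,2,2,2,2)$ does carry a deficient pair. After permuting the directions we may assume $a_1\ge 3$ and $a_2,\dots,a_5\ge 2$, so that each $a_i$ is at least the corresponding entry of $(3,2,2,2,2)$. I would then take the deficient pair $(T,T')$ of $(3,2,2,2,2)$ displayed in Fig.~\ref{fig:deficient_5} and \emph{pad} it: add the missing pseudolines of each bundle ($a_1-3$ in the first, $a_i-2$ in the $i$-th for $i\ge 2$), all packed near one corner of the zonotope and in identical positions in the two tilings, so they form a rigid boundary patch interacting with nothing else. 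The set of inverted triangles of the padded pair is exactly that of $(T,T')$, since the new pseudolines only bound triangles of constant sign; and no previously non-minimal inverted triangle can have become inclusion-minimal, because adding pseudolines never ``uncuts'' a triangle. Hence, as in the original pair, no inclusion-minimal triangle of the padded $T$ is inverted with respect to the padded $T'$; consequently the first flip of any flip-path from $T$ to $T'$ inverts a non-inverted triangle, i.e. raises the Hamming-distance, and the flip-distance strictly exceeds the Hamming-distance (by at least $2$, in fact). So the zonotope has a deficient pair and the condition is necessary.

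For \emph{sufficiency} I would invoke Lemma~\ref{lem:main}: it then suffices to show that in the spaces $(2,2,2,2,2)$ and $(a,b,c,d,1)_{a,b,c,d\ge1}$ each tiling admits, with respect to any other tiling, an inclusion-minimal inverted triangle, and this is precisely what a \texttt{TRUE} evaluation of the AND/OR tree described above certifies. The plan is therefore to feed the program the relevant bundle-size restriction (five size-$2$ bundles; or five bundles one of which has size $1$) and run it until it outputs a proof-subtree; the program returns one with \numprint{2034287} nodes in the first case and with only \numprint{7045} nodes in the second — a finite certificate for an infinite family of zonotopes. For additional assurance in the $(2,2,2,2,2)$ case one can independently sweep all \numprint{283323966772900} pairs of the space; no such sweep is possible for the family $(a,b,c,d,1)$, but the small size of its proof-subtree keeps a hand re-verification within reach.

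The main obstacle lies entirely on the sufficiency side, and chiefly in the $(2,2,2,2,2)$ case: bundle sizes are fixed but pseudoline \emph{positions} are not, so the search must confront the full combinatorial complexity of the space, and the proof-subtree it extracts is too large to produce or audit without a machine. The $(a,b,c,d,1)$ family, though infinite, is comparatively gentle — a single pseudoline in the fifth bundle pins down the local pictures enough to keep the search shallow, much as in the four-bundle argument of Lemma~\ref{lem:four_bundles} — so there the difficulty is conceptual (recognising that finitely much suffices) rather than computational.
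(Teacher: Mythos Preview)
Your proposal is correct and follows the paper's approach essentially verbatim: the necessity direction is the paper's one-line remark that the $(3,2,2,2,2)$ counterexample ``can be easily extended to larger zonotope,'' which you have simply unpacked (the padding argument and the observation that no inclusion-minimal triangle of the padded tiling can be inverted are exactly what that remark means), and the sufficiency direction is the same appeal to Lemma~\ref{lem:main} via the two proof-subtrees of sizes \numprint{2034287} and \numprint{7045}, together with the exhaustive double-check for $(2,2,2,2,2)$.
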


To conclude this subsection, let us mention that we also tested our program on the previous four-bundle case.
It confirms Lemma~\ref{lem:four_bundles}, with the proof-subtree found by our program having $72$ nodes, with $10$ leaves (that is, a bit more than the proof by hand).

\subsection{More bundles}

The case of rhombus tiling spaces with six or more bundles is quickly solved, since flip- and Hamming-distances turn out to be different for the simplest domain, namely the zonotope $(1,1,1,1,1,1)$, for which we found (by an exhaustive checking) $16$ deficient pairs among the \numprint{824464} pairs of the tiling space, each at flip-distance $12$ but at Hamming-distance $10$.
There are two pairs up to isometry, depicted on Fig.~\ref{fig:contrex_6}.
One of these deficient pairs  already appeared in \cite{felsner} in a similar context (Fig.~\ref{fig:contrex_6}, left, although only {\em singleton} bundles, that is, classical pseudoline arrangements, were there considered).
These deficient pairs yield our last lemma:

\begin{lemma}
Any rhombus tiling space of a six bundle zonotope admits at least one deficient tiling pair.
\end{lemma}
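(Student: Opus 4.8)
The plan is to reduce to the deficient pairs already exhibited for the zonotope $(1,1,1,1,1,1)$. Fix a six-bundle zonotope $D=(a_1,\ldots,a_6)$ with all $a_i\geq 1$. Recall from the five-bundle discussion that a pair of tilings is certified deficient by verifying that every inclusion-minimal triangle of one of them has the same sign in the other (this is the contrapositive use of Lemma~\ref{lem:main}); so among the pairs found for $(1,1,1,1,1,1)$ we may pick one, $(U,U')$, labelled so that \emph{no inclusion-minimal triangle of $U$ is inverted with respect to $U'$}. This is the only feature of $(U,U')$ the argument uses: everything else is a routine transfer of it to $D$ via a ``bundle-thickening''.

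Next I would build the candidate pair $(T,T')$ of $D$. Writing $p_1,\ldots,p_6$ for the pseudolines of $U$ (one per bundle), replace each $p_i$ by $a_i$ pairwise non-crossing pseudolines running in a thin strip around it, the whole picture kept generic (no two pseudolines of a bundle cross, no three pseudolines are concurrent); each former crossing $p_i\cap p_j$ is thereby blown up into a small $a_i\times a_j$ grid of crossings. Since in a zonotope any two pseudolines of different bundles cross exactly once, this is a legal pseudoline arrangement, hence defines a rhombus tiling $T$ of $D$; applying the identical thickening to $U'$ yields $T'$. Provided the strips are thin enough one gets the expected \emph{descent}: the three pseudolines of any triangle of $T$ belong to three distinct bundles $i<j<k$, the triangle lies near the triangle $(p_i,p_j,p_k)$ of $U$, and it has the same sign in $T$ (resp.\ $T'$) as $(p_i,p_j,p_k)$ has in $U$ (resp.\ $U'$). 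Hence a triangle of $T$ is inverted with respect to $T'$ if and only if its parent is inverted with respect to $U'$, so $T$ and $T'$ are at Hamming-distance $\sum a_i a_j a_k$ over the parent triples inverted in $(U,U')$, a positive number; in particular $T\neq T'$.

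The core step is to show that \emph{no inclusion-minimal triangle of $T$ is inverted with respect to $T'$}. Let $\Delta=(x,y,z)$ be inclusion-minimal in $T$, with parent triangle $(p_i,p_j,p_k)$ in $U$. If $(p_i,p_j,p_k)$ is inclusion-minimal in $U$, then by our choice of $(U,U')$ it has the same sign in $U'$, so $\Delta$ has the same sign in $T'$ by descent. Otherwise some $p_l$ cuts $(p_i,p_j,p_k)$, necessarily across exactly two of its sides, say the $p_i$- and $p_j$-sides (so $p_l$ meets $p_k$ outside the arc between $p_i\cap p_k$ and $p_j\cap p_k$). I would then verify, using only thinness of the strips, that every pseudoline $w$ of the thickened $l$-bundle meets $x$ inside the edge of $\Delta$ carried by $x$ (its crossing with $x$ sits in the $p_i\cap p_l$ grid, which lies between the $p_i\cap p_j$ grid and the $p_i\cap p_k$ grid along the $i$-strip, hence between $x\cap y$ and $x\cap z$), likewise meets $y$ inside the edge carried by $y$, and meets $z$ outside $\Delta$; thus $w$ enters and leaves $\Delta$, i.e.\ $w$ cuts $\Delta$, contradicting inclusion-minimality. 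So only the first case can occur, and $T$ has no inclusion-minimal triangle inverted with respect to $T'$.

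It then remains to conclude as in the proof of Lemma~\ref{lem:main}: each flip changes the sign of exactly one triangle, so a flip path from $T$ to $T'$ of length $\mathrm{Hamming}(T,T')$ would have to decrease the Hamming-distance at every step, and in particular its first flip would invert an inclusion-minimal triangle of $T$ inverted with respect to $T'$ --- but there is none. Hence $\mathrm{flip}(T,T')>\mathrm{Hamming}(T,T')$, so $(T,T')$ is a deficient pair of $D$, proving the lemma. The step I expect to be the real work is the geometric bookkeeping in the previous paragraph: one must fix a single thinness scale for all strips so that ``$p_i\cap p_j$ becomes a grid lying strictly on the predicted side of every other pseudoline'' and ``the order of crossings along each pseudoline is inherited from $U$'' hold simultaneously. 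Once that quantitative setup is in place the rest is purely combinatorial.
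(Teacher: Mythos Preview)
Your proposal is correct and follows the same route the paper takes: exhibit a deficient pair in the zonotope $(1,1,1,1,1,1)$ and then extend it to an arbitrary six-bundle zonotope, the paper only asserting that such a counterexample ``can be easily extended to larger zonotopes'' while you spell out the extension via bundle-thickening and verify that no inclusion-minimal triangle of the thickened tiling is inverted. One small remark: your appeal to ``the contrapositive use of Lemma~\ref{lem:main}'' to secure a base pair $(U,U')$ with no inclusion-minimal inverted triangle is not literally the contrapositive of that lemma (which quantifies over all pairs); the clean justification is the one you give at the end --- a Hamming-length flip path must decrease Hamming-distance at its first step, so any deficient pair can be pushed (by greedily flipping inverted minimal triangles until stuck) to one with the required property.
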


\begin{figure}[hbtp]
\centering
\includegraphics[width=0.95\textwidth]{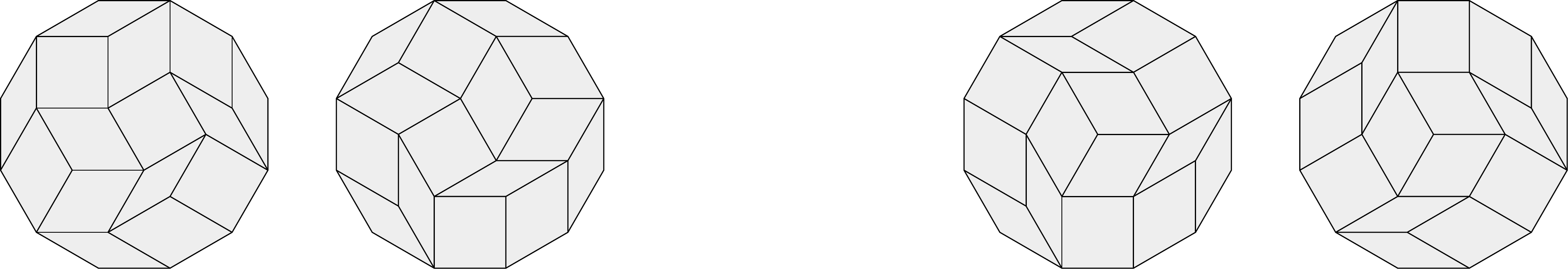}
\caption{The two deficient tiling pairs (up to symmetry) of $(1,1,1,1,1,1)$.}
\label{fig:contrex_6}
\end{figure}

\bibliographystyle{model1-num-names}

\begin{thebibliography}{99}
\bibitem{matroids} A.~Bj\"orner, M.~Las Vergnas, B.~Sturmfels, N.~White, G.~M.~Ziegler, \textit{Oriented matroids}, Cambridge University Press (1993).

\bibitem{aperiodic} O.~Bodini, Th.~Fernique, D.~Regnault, {\em Crystallization by stochastic flips}, Journal of Physics: Conference Series {\bf 226} (2010), 012022.

\bibitem{analco} O.~Bodini, Th.~Fernique, D.~Regnault, {\em Stochastic flips on two-letter words}, proc. of Analytic Algorithmics and Combinatorics Analco'10 (2010), pp. 48--55.

\bibitem{dgci} O.~Bodini, Th.~Fernique, \'E.~R\'emila, {\em Distances on lozenge tilings}, proc. of Discrete Geometry for Computer Imagery DGCI'09, Lectures Notes in Computer Science {\bf 5810} (2009), pp. 240--251.

\bibitem{CR} F.~Chavanon, \'E.~R\'emila, {\em Rhombus tilings: decomposition and space structure}, Discrete and Computational Geometry {\bf 35} (2006), pp. 329--358.

\bibitem{debruijn} N.~G.~De Bruijn, {\em Algebraic theory of Penrose's non-periodic tilings of the plane}, Indagationes mathematicae {\bf 43} (1981), pp. 39--66.

\bibitem{destainville} V.~Desoutter, N.~Destainville, {\em Flip dynamics in three-dimensional random tilings}. J. physics. A, mathematical and general {\bf 38} (2005), pp. 17--45.

\bibitem{destainville2} N.~Destainville, {\em Mixing times of plane random rhombus tilings}, Discrete Mathematics and Theoretical Computer Science, proceedings of Discrete Models: Combinatorics, Computation, and Geometry (2001), pp. 1-–22.

\bibitem{felsner} S.~Felsner,  H.~Weil, {\em A theorem on higher Bruhat orders}, Discrete and Computational Geometry {\bf 23} (2003), pp. 121--127.

\bibitem{aofa} Th.~Fernique, D.~Regnault, {\em Stochastic flips on dimer tilings}, Discrete Mathematics and Theoretical Computer Science, proceedings of Analysis of Algorithms (2010), pp. 205--218.

\bibitem{kenyon} R.~Kenyon, {\em Tiling a polygon with parallelograms}. Algorithmica {\bf 9} (1993), pp. 382--397.

\bibitem{kenyon2} R.~Kenyon, {\em The planar dimer model with boundary: a survey}, in Directions in mathematical quasicrystals, Providence, R.I. 2000, pp. 307–328.

\bibitem{mixing1} M.~Luby, D.~Randall, A.~Sinclair, {\em Markov Chain algorithms for planar lattice structures}, SIAM Journal on Computing {\bf 31} (2001), pp. 167--192.

\bibitem{flip} S.~Lyonnard, G.~Coddens, Y.~Calvayrac, D.~Gratias, {\em Atomic (phason) hopping in perfect icosahedral quasicrystals $\textrm{Al}_{70,3}\textrm{Pd}_{21,4}\textrm{Mn}_{8,3}$ by time-of-flight quasielastic neutron scattering}, Physical Review B {\bf 53} (1996), pp. 3150--3160.

\bibitem{propp} J.~Propp, {\em Lattice structure for orientations of graphs}, preprint, 1993.

\bibitem{mixing2} D.~Randall, P.~Tetali, {\em Analyzing Glauber dynamics by comparison of Markov chains}, Journal of Mathematical Physics {\bf 41} (2000), pp. 1598-–1615.

\bibitem{mixing3} D.~B.~Wilson, {\em Mixing times of lozenge tiling and card shuffling Markov chains}, Annals of Applied Probability {\bf 14} (2004), pp. 274--325.

\end{thebibliography}

\end{document}